\title{Online embedding of metrics\thanks{This is an expanded and corrected
version of the paper by the same name that appeared in SWAT'20.}}
\author{Ilan Newman\thanks{Department of Computer Science, University
of
Haifa, Haifa, Israel. E-mail: {\tt ilan@cs.haifa.ac.il}. 
This Research
was supported by The Israel Science Foundation, grant number
497/17}, $~ $
$ ~ $Yuri Rabinovich\thanks{Department of Computer Science, University
 of  Haifa, Haifa, Israel. E-mail: {\tt yuri@cs.haifa.ac.il}}}
\date{\today}
\DeclareMathAlphabet{\mathbbold}{U}{bbold}{m}{n}
\pgfplotsset{compat=1.7}
\definecolor{Blue}{rgb}{0.3,0.3,0.9}
\definecolor{Black}{rgb}{0.0,0.0,0.0}
\newcommand{\ignore}[1]{}
\def \qed {\hspace*{0pt} \hfill {\quad \vrule height 1ex width 1ex depth 0pt}
 \medskip}
\newcommand{\R}{\ensuremath{\mathbb R}}
\newcommand{\Z}{\ensuremath{\mathbb Z}}
\newcommand{\N}{\ensuremath{\mathbb N}}
\newtheorem{theorem}{Theorem}[section]
\newtheorem{definition}[theorem]{Definition} 
\newtheorem{claim}[theorem]{Claim}
\newtheorem{lemma}[theorem]{Lemma}
\newtheorem{corollary}[theorem]{Corollary}
\def\dist {{\rm dist}}
\def\father{{\rm father}}
\def \S {{ \mathbb S }}
\def \noproof
\def\wp {{\widetilde{\psi}}} 
\def\dist {{\rm dist}}
\def\ord{{\rm t}}
\def\into{\hookrightarrow}
\newcommand{\D}[1]{D_{#1}}
\def\into{\hookrightarrow}
\begin{document}
\maketitle
%%%%%%%%%%%%%%%%%%%%%%%%%%%%%%%%%%%%%%%%%%%%%%%%%%%%%%%%%%%%%%%%%%%
%%%%%%%%%%%%%%%%%%%%%%%%%%%%%%%%%%%%%%%%%%%%%%%%%%%%%%%%%%%%%%%%%%%

\begin{abstract}
We study deterministic online embeddings of metrics spaces into normed spaces
and into trees against an adaptive adversary. Main results include a polynomial 
lower bound on the (multiplicative) distortion of embedding into Euclidean spaces, 
a tight exponential upper bound on embedding into the line, and a 
$(1+\epsilon)$-distortion embedding in $\ell_\infty$ of a suitably high dimension.
\end{abstract}
%%%%%%%%%%%%%%%%%%
%
%%%%%%%%%%
\section{Introduction}
The modern theory of low-distortion embeddings of finite metrics
spaces into various host spaces began to take shape with the appearance of classical results
of Johnson and Lindenstrauss~\cite{JL}\footnote{\em Any $n$-point
  Euclidean metric can be efficiently embedded into
  $\ell_2^{{{\log n} / {\epsilon^2}}}$ with
  $(1+\epsilon)$-distortion.}  and Bourgain~\cite{Bou}\footnote{ \em Any
  $n$-point metric can be efficiently embedded into Euclidean
  space with distortion $O(\log n)$.}, in the last
decades of the 20'th century. It was soon realized that this
theory provides powerful tools for numerous theoretical and practical
algorithmic problems. Nowadays, it is a well developed area deeply related 
to various problems of algorithmic design.

In this paper we study a relatively neglected aspect of metric
embeddings, the {\em online} embeddings.
In this setting, the vertices of the input finite metric space $(X,d)$ are exposed one by one, together with their distances 
to the previously exposed vertices. Each vertex $v$ is
mapped to the host space $({\cal H}, d_{\cal H})$ upon its exposure without altering the
mappings of the previously exposed vertices.
The quality of the resulting embedding $\phi: X \rightarrow \cal H$ is measured by its  {\em expansion} and {\em contraction}:
\[
{\rm expansion:}~~~~\max_{v, u \in X} {{d_{\cal H} (\phi(v),\phi(u))} \over {d(v,u)}}~~~~~~~~~~~~~
{\rm contraction:}~~~~\max_{v, u \in X} {{d(v,u)} \over {d_{\cal H} (\phi(v),\phi(u))}}
\]
The product of the two is called the (multiplicative) {\em
  distortion} of $\phi$. The distortion $\dist(d \into d_{\cal H})$ of
embedding $(X,d)$ into $({\cal H}, d_{\cal H})$ is the minimum
possible distortion over all such mappings $\phi$. Since usually (and in this
paper in particular) the host space is scalable,
$\dist(d \into d_{\cal H})$ can be alternatively defined in the
offline setting as the minimum possible contraction
over non-expanding mappings, or he minimum possible expansion over 
non-contracting mappings.

Since in the online setting the scaling cannot be applied in retrospect,
it is quite possible that the algorithms restricted to producing non-contracting,
non-expanding, or unrestricted embeddings, may all perform differently as measured
by the distortion they incur.

In this paper we focus only on {\it deterministic embeddings} against an {\em adaptive adversary} into the standard
normed spaces $\ell_2, \ell_1, \ell_\infty$ of various dimensions, and
into the trees. Our results clarify what can be achieved in dimension one, and
in dimension exponential in $n$. What happens in between is a
challenging open problem. We also present  a lower bound on online
embedding of a size-$n$ metrics into $\ell_2$ of unbounded dimension.
It is our hope that the findings of the present paper may provide a good starting point for further studies of online embeddings. 

In addition to deterministic online embedding algorithms, it is natural to consider  
significantly more relaxed notion of {\em probabilistic} online embedding against {\em non-adaptive adversary}. In this setting, instead of considering the distortion between $d$ and $d_{\cal H}$, one considers distortion between $d$ and 
$E[d_{\cal H}]$, the expectation over the host metrics produced by a random embedding. Unlike before, the adversary cannot see the partial embeddings, and thus must prepare the hard metric in advance. In this relaxed setting much stronger results were previously obtained, as discussed in the next section.
\subsection{Previous Work}
To the best of our knowledge, the first result about online embedding
has appeared implicitly in the paper ~\cite{NR} of the present authors. The
main result there is a  $\Omega(\sqrt{n})$ lower bound on the distortion 
of an {\em offline} embedding of a shortest path metric of a certain family of serious-parallel graphs $\{D_n\}$ of size exponential in $n$ into $\ell_2$. Without ever mentioning the term "online", the proof,  in fact, establishes a lower bound of $\sqrt{n}$ on the distortion of an online embedding of the shortest path metric of a certain simple family of subgraphs ${G_{2n}}$ of $D_n$ with $2n$ vertices.
Although the paper~\cite{NR} did receive due attention, and its online implications were understood by the authors of~\cite{IMSZ}\footnote{It served as partial motivation for their paper. [Private communication from the authors]},     
the explicit statement, which was and still remains the state of the art
in its context,  has never appeared in print, and went largely unnoticed. Here we amend this situation. This is done in Section~\ref{sec:l2}. \\ 
$\mbox{}$

Another related result has appeared in~\cite{ABLNRRV}\; (Th.~3.1)\; in a rather unrelated context.  
Let $(X,d)$ be an arbitrary metric space with $|X|=n$. Assume that $X$ is exposed in a random uniform order. Then the greedy online algorithm that attaches each new point $v$ to the closest one among the points exposed so far, say $u$, by an edge of length $d(v,u)$, produces
a random dominating tree $T$ so that ${\rm E}[d_T]$ expands $d$ by $O(n^2)$.     

If the order is fixed, a similar but simpler analysis implies that
$d_T$ expands $d$ by at most $2^{n}$. 
Since this turns out to be rather tight 
(up to the basis of the exponent) for a deterministic embedding into a
tree, we shall discuss it in more details in Section~\ref{sec:tree}. \\
$\mbox{}$ 

The first published paper explicitly dedicated to online embeddings is~\cite{IMSZ}.  
Observing that a large part of the offline embedding procedure from~\cite{Barl} can be implemented 
online, ~\cite{IMSZ} has established quite strong results for
probabilistic online embeddings. Most of these results depend on the
so called {\em aspect ratio} $\Delta$ of the input metric $d$, that is, the ratio between the largest and the smallest distance in it. The main results of~\cite{IMSZ} are as follows (it is assumed that $|X|=n$):
%outline probabilistic online algorithms for embedding 
%metric spaces $(V,d)$, $|V|=n$, into some host spaces, and also provide lower bounds on such embeddings: 
\begin{enumerate}
\item A metric space $(X,d)$ can be probabilistically online embedded into $\ell_p^{\log n \cdot (\log \Delta)^{1/p}}$~ with distortion $O(\log n \cdot \log \Delta)$ for any $p \in [1,\infty]$. 

For $p=\infty$, $(X,d)$ can also be embedded
in $\ell_\infty^{\log^{O(1)} n}$ with distortion $O(\log n \cdot \sqrt{\log \Delta})$.  

%On the negative side, $(V,d)$ cannot be probabilistically online embedded into $\ell_2^D$ with distortion better than
%$\Omega(n^{1/D-1})$ even when $d$ is $(1+\epsilon)$-close to a submetric of $\ell_2^D$.
%%
\item A metric space $(X,d)$ can be probabilistically online embedded into a distribution of a non-contracting ultrametics (and subsequently tree-metrics) with distortion $O(\log n \cdot \log \Delta)$.

On the negative side, $(X,d)$ cannot be probabilistically online embedded into a distribution of a non-contracting 
ultrametics with distortion better than ~$\min \{ n, \log \Delta \}$.
\end{enumerate} 
The subsequent study~\cite{BarFan} entends the results of ~\cite{IMSZ},
and, in particular, shows a lower bound of $\widetilde{\Omega}(\log n \cdot \log \Delta)$ for distortion of  probabilistic embeddings into trees.
%%%
\subsection{Our Results}
We are interested in online embeddings into normed spaces, and in particular in the interplay between the distortion and the dimension of the host space. Unlike~\cite{IMSZ,BarFan}, we seek bounds independent of the 
aspect ratio, consider only deterministic embeddings, and assume an adaptive adversary.
\begin{enumerate}
\item {\bf Into $\ell_2$\,:}  There exists a family of metrics $\{d_{2n}\}$ 
such that each $d_{2n}$ (a metric on $2n$ points) requires distortion
$\sqrt{n}$
% ilan: is this exact value verified ?
in any deterministic online embedding into $\ell_2$ of {\em any dimension}. 
%The order of exposure of $d_{2n}$ is, naturally, not fixed, and is chosen by the adversary. 
The metrics $\{d_{2n}\}$ are the shortest-path metrics of a family $\{G_{2n}\}$ of weighted series-parallel graphs. These metrics are quite simple; e.g., they embed into the line with a constant universally bounded distortion. % Each $d_{2n}$ is extended 
% in four possible ways, to yield four $d_{2n+2}$'s.

By John's Theorem from the theory of finite-dimensional normed spaces, this implies an $\sqrt{n/D}$ lower bound on
online embedding of $d_{2n}$ into any normed space of dimension $D$. %Comparing to the corresponding lower bounds of~\cite{IMSZ} (and ignoring the restrictions on $d$), we conclude that their result is stronger for $D=2,3$, and incompatible/weaker for other dimensions.
\item {\bf Into trees, and into the line.} As mentioned above, a simple greedy online embedding algorithms results in a dominating tree whose metric 
distorts the input $d$ by at most $2^{n}$. We show that this is tight up to the basis of the exponent, and establish a lower bound of $\Omega(2^{n \over 2})$. The "hard" metrics used in the proof are in fact submetrics of a (continuous) cycle, and they embed (offline) into the line with a constant universally bounded distortion.  

Embedding into the line is considerably harder. We provide an online algorithm achieving a distortion of $O(n\cdot 4^n)$. 
\item{\bf Distortion and dimension.} What is the smallest dimension $D$ such that $d$ can be embedded into $\ell_\infty^D$ with distortion at most $(1+\epsilon)$?
%\footnote{It is well known that any metric $d$ of size $n$ can be
%embedded into $\ell_\infty^{n-1}$. Thus, unlike any other $\ell_p$, $\ell_\infty$ is universal in the sense that any metrics 
%is isometric to its submetric.} 
Somewhat surprisingly, it turns out that even 4-points metrics (that always isometrically embed offline into $\ell_\infty^3$) require dimension  $D = \Omega({1\over {\epsilon}})$.  

On the positive side, we (efficiently) prove that $D=\left({{cn} \over {\epsilon}}\right)^n$, where $c$ is a suitable universal constant, suffices 
for $\epsilon \leq 1$. While such result would seem to be quite expected, its proof requires some technical effort.
\item {\bf Isometric online embeddings.} We show that size-$n$ tree metrics  (i.e., arbitrary submetrics of the shortest-path metrics of weighted trees) can be isometrically online embed into $\ell_1^{n-1}$. This implies, e.g., 
that such metrics can be online embedded in $\ell_2^{n-1}$ with distortion
$\sqrt{n}$, and isometrically online embedded into $\ell_\infty^{2^{n-2}}$ for $n>1$. 
%One conclusion is that if $d$ probabilistically online embeds with expansion $a$ into a distribution of tree metrics supported on at most $k$ trees, then $d$ embeds
%with the same expansion into $\ell_1^{k(n-1)}$ and $\ell_\infty^{2^{k(n-1)-1}}$.
\end{enumerate} 
{\bf Open questions:}
The results of this paper seem to suggest that in contrast to the classical offline setting, where "good" embeddings have distortion polylogarithmic in $n$, in the online setting "good" embedding have distortion polynomial in $n$. We conjecture that a polynomial distortion is indeed achievable, 
and moreover, a polynomial dimension is always sufficient for the task. It cannot be ruled out at the moment that even a constant dimension will do.
%%%%%%%%%%%%%%%%%%%%%%%%%%%%%%%%%%%%%%%%%
\section{A lower bound for embeddings into $\ell_2$}
\label{sec:l2}
As mentioned in the previous section, the following theorem is implied by the proof of the main result 
of~\cite{NR}. For the sake of simplicity, we restrict the discussion to non-contracting embeddings. A removal of this restriction requires but a slight modification of the argument. 
\begin{theorem} 
\label{th:NR}
There is a family of metrics $\{d_{2n}\}$ on $2n$ points for any natural $n$, %($n$ known in advance)
that requires expansion $\geq \sqrt{n}$ in any non-contracting online embedding into $\ell_2$ of
any (infinite included) dimension. 
\end{theorem}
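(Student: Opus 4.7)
The plan is to show that the Euclidean rigidity of $\ell_2$ can be exploited by an adaptive adversary who reveals vertices in an order tailored to the current embedding. Concretely, I would construct a weighted series-parallel graph $G_{2n}$ on $2n$ vertices whose shortest-path metric $d_{2n}$ is essentially one-dimensional -- it embeds into the line with constant distortion -- but which, disclosed in a worst-case order, forces any non-contracting online $\ell_2$-embedding to accumulate orthogonal displacement across $\Omega(n)$ adversarial binary choices. Via the parallelogram law, these displacements add in squares, yielding total expansion $\Omega(\sqrt{n})$.

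For the construction, I would take $G_{2n}$ to be a natural subgraph of the recursive diamond graph $D_n$ of \cite{NR}: two poles $s,t$ together with $n-1$ layers of ``twin'' pairs $\{u_i^+,u_i^-\}$ at nested scales, where the two twins in each layer have identical (prescribed) distances to $s$, $t$, and to earlier layers, but a small positive intra-pair distance $\eta_i$. This series-parallel composition guarantees that no matter which of $u_i^+$ or $u_i^-$ the adversary reveals, the disclosed distances are consistent with $d_{2n}$. The adversary first reveals $s$ and $t$, pinning $\|\phi(s)-\phi(t)\|\geq d(s,t)$; then at each successive layer $i$, she inspects the current embedding and reveals whichever of $u_i^{\pm}$ lies on the less accommodating side of the affine sphere $S_i$ of positions admissible under non-contraction with respect to $\phi(s),\phi(t)$, and the earlier $\phi(u_j^{\pm})$'s.

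The quantitative heart of the argument is a Pythagorean bookkeeping. At layer $i$, let $\delta_i$ denote the additional squared $\ell_2$-displacement that the adversary's chosen vertex is forced to suffer relative to what the non-contraction constraints on earlier vertices alone would require. Using only the parallelogram law and the specific distance profile imposed by the series-parallel structure, one shows that the adversarially-selected binary option always contributes $\delta_i\geq c$ for some universal $c>0$, and -- crucially -- that these contributions accumulate additively because the adversary's choice is aligned with a direction orthogonal to the span of earlier commitments. Summing yields $\sum_i\delta_i\gtrsim n$, which translates via the identity $\|x-y\|^2=\|x\|^2+\|y\|^2-2\langle x,y\rangle$ into the existence of a pair of revealed vertices whose $\ell_2$-distance exceeds their $d_{2n}$-distance by a factor $\Omega(\sqrt{n})$.

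The main obstacle is handling the unbounded dimension of the host space. Since the algorithm is free to embed into $\ell_2$ of arbitrary (even infinite) dimension, it may try to ``pre-spread'' into many orthogonal directions to hedge against future revelations. The crux is to prove that whatever subspace the algorithm has committed to in advance, the adversary's adaptive choice of twin can always be directed into its orthogonal complement in an amortized sense, so that the $\delta_i$'s genuinely accumulate in Pythagorean fashion rather than being absorbed into dimensions the algorithm has already occupied. This is where the recursive diamond structure of $G_{2n}$ is essential: each new layer furnishes a \emph{fresh} pair of twins whose metric profile with respect to all previously revealed vertices is rigid, leaving the algorithm no room to amortize. This is precisely the content of the original analysis of \cite{NR}, specialized here to the $2n$-vertex subgraph $G_{2n}$.
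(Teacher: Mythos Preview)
Your high-level ingredients are the right ones --- a subgraph of the recursive diamond, an adaptive adversary, and the parallelogram law --- and you are correct that the argument is essentially that of~\cite{NR}. But the concrete mechanism you describe gives the adversary no leverage. You propose that at layer $i$ the adversary ``reveals whichever of $u_i^{\pm}$ lies on the less accommodating side'', yet you also stipulate that $u_i^+$ and $u_i^-$ have \emph{identical} distances to $s$, $t$, and to all earlier layers. Under that assumption the two twins are indistinguishable to the online algorithm at the moment of exposure: whatever placement it would choose for $u_i^+$, it would equally choose for $u_i^-$. The binary choice you hand the adversary is therefore vacuous, and no $\delta_i$ is forced.

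The paper's adversary makes a different and much simpler adaptive choice. At each stage she looks at the four shortest edges of the current graph, picks an edge $(v,u)$ that the algorithm has already expanded by at least $\sqrt{n-1}$ (this exists by induction), and exposes \emph{both} new vertices $x,y$ of a $4$-cycle $v$--$x$--$u$--$y$--$v$ replacing that edge. The quadrilateral form of the parallelogram law,
\[
\|\phi(v)-\phi(x)\|^2+\|\phi(x)-\phi(u)\|^2+\|\phi(u)-\phi(y)\|^2+\|\phi(y)-\phi(v)\|^2 \;\ge\; \|\phi(v)-\phi(u)\|^2+\|\phi(x)-\phi(y)\|^2,
\]
together with the inductive bound on $\|\phi(v)-\phi(u)\|$ and non-contraction on $\|\phi(x)-\phi(y)\|$, immediately yields that one of the four new (half-length) edges is expanded by $\ge\sqrt{n}$. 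There is no orthogonality bookkeeping and no obstacle from unbounded dimension: the inequality above holds verbatim in every Hilbert space, so ``pre-spreading'' into extra coordinates buys the algorithm nothing. That you identify dimension as the main difficulty suggests this clean edge-subdivision induction is the piece missing from your picture.
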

\begin{proof} {\bf (Sketch)}~~ Given an online non-contracting embedding algorithm ${\cal A}$, the "hard" $\{d_{2n}\}$ is constructed as follows.
It will be the shortest-path metric of the following weighted graph $G_{2n}$. 
$G_2$ is simply the unit-weighted $K_2$. The graph $G_{2n}$ is obtained from $G_{2n-2}$ by choosing
an edge $e=(v,u)$ of weight $2^{2-n}$ in $G_{2n-2}$, and replacing it by a $4$-cycle $\bf v$-$x$-$\bf u$-$y$-$\bf v$
with edges of weight $2^{1-n}$. I.e., a pair of new vertices $x,y$ is exposed, together with their distances
to the previously exposed vertices. 

It remains to specify the edge $e$. The adversary will choose an edge $e=(v,u)$ of weight $2^{2-n}$ in $G_{2n-2}$ expanded by ${\cal A}$ by at least a factor of  $\sqrt{n-1}$. The existence of such 
an edge is proven by induction. It clearly exists for $n=2$. Let $\phi$ be the embedding produced by ${\cal A}$. By the parallelogram law for Euclidean spaces,
\[
\|\phi(x)-\phi(v)\|_2^2 + \| \phi(x)- \phi(u) \|_2^2 + \|\phi(y) - \phi(u)\|_2^2  + \|\phi(y)-\phi(v)\|_2^2 ~~\geq~~ \|\phi(v)- \phi(u) \|_2^2 + \|\phi(x)- \phi(y) \|_2^2\,.
\]
Since by inductive assumption ~ $\|\phi(v)- \phi(u) \|_2 \geq
\sqrt{n-1}\cdot 2^{2-n}$, and since $\phi$ is non-contracting, ~ $\|\phi(x)- \phi(y) \|_2 \geq 2^{2-n}$, one concludes that one of the four new edges
$(v,x)$, $(x,u)$, $(u,y)$ and $(y,u)$, must be expanded by ${\cal A}$ by at least $\sqrt{n}$.

As mentioned above, the metrics $d_{2n}$ are very simple. E.g., it is an easy matter to verify that each 
$d_{2n}$ (offline) embeds into the line with distortion $\leq 3$, and isometrically embeds into $\ell_1$.
\end{proof}
%%%
Currently, we do not know how tight is the above bound, and whether is it at all possible
to obtain a polynomially in $n$ small distortion for an online embedding into $\ell_2$.
We do know that this is possible for tree metrics, in view of Theorem~\ref{lem:l-5} below.
%A related interesting question is whether there exists a family of metrics that require a polynomial 
%distortion in online embedding into $\ell_1$ regardless of dimension.  

To conclude this section, let us comment that the argument used here can be employed for any uniformly
convex space, and not just $\ell_2$. See, e.g., the generalization of~\cite{NR} by~\cite{LN}. 
%, and that in general it is at most exponential (by Theorem~\ref{thm:infinity_large}).
%
%
%
%
%
%%%%%%%%%%%%%%%%%%%%%%%%%%%%%%%%%%%%%%%%%%%%%%%%
\section{Online embedding into trees}\label{sec:tree}
By embedding into trees we mean an online embedding that constructs a tree 
whose vertices may contain {\it Steiner points}. That is, the constructed tree,
besides the points corresponding to the input metric,
may contain additional points. At each step, once a new vertex  is
exposed, the embedding algorithm either chooses an existing vertex, or creates a Steiner point, and 
attaches to it the new vertex by a new edge of a suitable weight. 
%Thus, the new vertex is always a leaf, except when the weight is  $0$.

\begin{theorem}
\label{cl:tree-emb}
Any metric space on $n$ points can be deterministically online embedded into a
tree with distortion $\leq 2^{n-1}-1$, even without using  Steiner points. 
A priori knowledge of $n$ is not required.
\end{theorem}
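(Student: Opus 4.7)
The plan is to analyze the natural greedy ``nearest-neighbor'' online algorithm: when a new vertex $v_k$ is exposed, look up its distances to the previously exposed points, pick a closest such point $u_k \in \{v_1,\ldots,v_{k-1}\}$, and attach $v_k$ as a new leaf to $u_k$ by a tree edge of weight exactly $d(v_k,u_k)$. This procedure uses no Steiner points and makes no reference to $n$, so it satisfies the two structural requirements of the statement. What remains is to control the distortion.

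Non-contraction of $d_T$ relative to $d$ is immediate: any tree path from $x$ to $y$ visits a sequence of original points $x=w_0,w_1,\ldots,w_m=y$, and each edge weight equals the true distance between its endpoints, so $d_T(x,y)=\sum_i d(w_i,w_{i+1})\ge d(x,y)$ by iterated triangle inequality. So the whole task is an expansion bound, which I would prove by induction on $k$: after $k$ rounds, $d_T(v_i,v_j)\le(2^{k-1}-1)\,d(v_i,v_j)$ for all $i,j\le k$. The base case $k=2$ is trivial, since the only pair satisfies $d_T=d$.

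For the inductive step, pairs $(v_i,v_j)$ with $i,j<k$ are unaffected by the new insertion and are handled by the hypothesis. For a pair involving $v_k$, the unique tree path goes $v_k \to u_k \to v_j$, so
\[
d_T(v_k,v_j)\ =\ d(v_k,u_k)\ +\ d_T(u_k,v_j).
\]
The nearest-neighbor choice gives $d(v_k,u_k)\le d(v_k,v_j)$; the inductive hypothesis applied to the pair $u_k,v_j$ (both among the first $k-1$ points) gives $d_T(u_k,v_j)\le(2^{k-2}-1)\,d(u_k,v_j)$; and triangle inequality yields $d(u_k,v_j)\le d(u_k,v_k)+d(v_k,v_j)\le 2\,d(v_k,v_j)$. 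Combining,
\[
d_T(v_k,v_j)\ \le\ d(v_k,v_j)\ +\ (2^{k-2}-1)\cdot 2\,d(v_k,v_j)\ =\ (2^{k-1}-1)\,d(v_k,v_j),
\]
which closes the induction and yields the claimed bound at $k=n$.

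There is no real obstacle here beyond bookkeeping; the only point that warrants attention is that the factor-$2$ blow-up per round comes precisely from the inequality $d(u_k,v_j)\le 2\,d(v_k,v_j)$, which in turn relies on $u_k$ being a \emph{nearest} previously exposed point. A slacker choice of neighbor, or a slacker use of the triangle inequality, would yield a $3^{n-1}$-type bound; keeping both tight produces exactly $2^{n-1}-1$. Since the algorithm is purely local and the inductive statement is maintained after every exposure, the bound holds uniformly without any a priori knowledge of $n$.
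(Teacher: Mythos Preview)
Your proof is correct and follows essentially the same approach as the paper: the same nearest-neighbor greedy algorithm, the same non-contraction observation, and the same inductive expansion bound using $d(u_k,v_j)\le 2\,d(v_k,v_j)$ from the nearest-neighbor property. The paper phrases the induction as a recursion $\alpha_{k+1}\le 2\alpha_k+1$ on the expansion factor rather than carrying the closed form $2^{k-1}-1$ through, but the arguments are identical.
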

\begin{proof}
The algorithm greedily connect the new point $v$ to the previously exposed point $u$ 
that is closest to $v$ in the metric $d$ by an edge of weight $d(v,u)$
 
The analysis is essentially the same as in~\cite{ABLNRRV}. Let $\widetilde{d}$
denote the tree metric approximating $d$. Clearly, $\widetilde{d}$ is not-contracting.
Let $\alpha_k$ denote the its expansion after $k$ steps. Then, $\alpha_2 =1$,
and\; $\alpha_{k+1} \leq 2\alpha_{k} + 1$. Indeed,\, let $x$ be the new point, and
assume it was connected to $y$. Then, for any previously exposed vertex $a$,
\[
\widetilde{d}(a,x) ~=~  \widetilde{d}(a,y) + d(x,y) ~\leq~
\alpha_{k}\cdot d(a,y) + d(x,y) ~\leq~\alpha_{k}\cdot\left(d(a,x) + d(x,y)\right) +  d(x,y) 
~\leq~ (2\alpha_{k} + 1)\cdot {d}(a,x),
\]
%where the penultimate inequality is by the triangle inequality for $d$, 
where the last inequality follows from the choice of $y$.\\
The recursive formula $\alpha_2=1$,\; $\alpha_{k+1} \leq 2\alpha_{k} + 1$\; implies \;$\alpha_n \leq 2^{n-1}-1$.
\end{proof}
This turns out to be rather tight:
\begin{theorem}\label{thm:tree-lb}
There exists a family $\cal F$ of metrics on $n$ points for which any online embedding
into a tree incurs a distortion of at least $2^{\lfloor (n-4)/2 \rfloor}$. 

All the metrics in $\cal F$
will be submetrics of $(C,d_C)$, the continuous cycle $C$ of length one, equipped with its shortest path metric. Moreover, they are (offline) embeddable into the line with distortion 3.
\end{theorem}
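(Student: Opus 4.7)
The plan is to prove the lower bound by induction on $n$, constructing a hard metric adaptively against any given online algorithm $\mathcal{A}$. The invariant I would maintain is that after $n \geq 4$ points have been revealed, the tree $T$ produced by $\mathcal{A}$ contains a cyclically adjacent pair $(a,b)$ of short-arc length $\ell$ with $d_T(a,b) \geq 2^{\lfloor (n-4)/2 \rfloor} \cdot \ell$. The base case $n=4$ places points at $0, 1/4, 1/2, 3/4$; the cycle's three pair-sums for this quadruple are $1, 1/2, 1/2$, which violate the tree $4$-point condition (two maxima equal), so any tree embedding already incurs distortion $\geq 2$ on some cycle-adjacent pair, far more than the $2^0=1$ needed at the base.

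For the inductive step $n \to n+2$, given the trap pair $(a,b)$ with $d_T(a,b) \geq \alpha \ell$, the adversary reveals two new points: first $x$ at the midpoint of the short arc $(a,b)$, and then $y$ at the cycle-antipode of $x$. The first addition, via the triangle inequality $d_T(a,x)+d_T(x,b) \geq d_T(a,b) \geq \alpha\ell$, ensures one of the new cyclic neighbors $(a,x), (x,b)$ already has expansion $\geq \alpha$ on an arc of length $\ell/2$. The crucial second addition exploits the non-contraction constraint $d_T(x,y) \geq 1/2$: since $y$'s cycle-neighbors, say $w$ and $w'$, lie close to $y$ on the cycle, but the tree path from $y$ must reach $x$ over a distance at least $1/2$, the algorithm's placement of $y$ is caught between conflicting requirements. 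Using the tree triangle inequality along a path from $y$ through an existing vertex to $x$, together with the already-expanded path between the old trap endpoints, one shows that at least one cyclic-neighbor pair $(y, w)$ or $(y, w')$ must have tree distance of order $\alpha\ell$ while its cycle distance is of order $\ell/2$ or smaller, yielding expansion $\geq 2\alpha$. This pair becomes the new trap for the next induction step.

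The offline line embeddability with distortion $\leq 3$ is verified by observing that the adversary's recursive strategy always subdivides arcs strictly inside the "trap" region while preserving at least one un-subdivided cyclic arc of length $\geq 1/4$. For instance, one can check that at each stage there are two arcs of length $1/4$ in the construction (the original ones between the cycle quarters not yet refined), and the strategy only refines smaller sub-arcs. Cutting the cycle along such a preserved arc yields a line of length $3/4$; the only pairs distorted by the cut are those whose short arc crossed the cut, and for those the expansion is bounded by $(1-1/4)/(1/4) = 3$, while all other pairs embed isometrically.

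The main technical obstacle I anticipate is the case analysis in the inductive step, driven by the three possible tree $4$-point topologies on $\{a,b,x,y\}$. When the tree's topology matches the cyclic one---grouping $\{a,x\}$ with each other and $\{b,y\}$ together---the direct $4$-point bound on this quadruple alone yields only expansion $\geq \alpha$ on the subdivided-arc pairs, not $2\alpha$. In this case one must exhibit the doubled expansion on a cycle-neighbor pair of $y$ rather than of $x$, leveraging that the tree placement of $y$ is forced, via $d_T(x,y) \geq 1/2$, to traverse an existing long tree path that inherits the previous trap's expansion. Ensuring that this new doubled pair is itself cyclically adjacent (so the induction iterates) requires tracking exactly where $y$ lands relative to the previously placed points, which is the most delicate part of the argument.
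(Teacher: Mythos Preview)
Your inductive invariant --- a single cyclically-adjacent pair $(a,b)$ with tree-expansion $\geq 2^{\lfloor (n-4)/2\rfloor}$ --- is too weak to close the induction, and the step where you claim the doubling is where the argument breaks. Revealing the midpoint $x$ of the trap arc only \emph{preserves} the expansion on one of the half-arcs; it does not double it. You then place $y$ at the antipode of $x$ and assert that ``the tree path from $y$ must reach $x$ over distance at least $1/2$'' forces some cycle-neighbour pair $(y,w)$ to have expansion $\geq 2\alpha$. But nothing prevents the algorithm from attaching $y$ in the tree so that $d_T(y,w)$ and $d_T(y,w')$ are both essentially equal to the corresponding cycle distances: the expanded region of $T$ lies entirely between $a$, $x$, $b$, and $y$'s placement need not interact with it beyond the single constraint $d_T(x,y)\geq 1/2$, which is easily satisfied without inflating $d_T(y,w)$. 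You yourself flag this as ``the most delicate part'' and offer only a sketch (``tree triangle inequality along a path from $y$ through an existing vertex to $x$''); that sketch does not go through. A secondary issue: you invoke $d_T(x,y)\geq 1/2$ as a ``non-contraction constraint'', but the theorem is about arbitrary embeddings, so the lower bound must control the product of expansion and contraction, not expansion alone.

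The paper avoids both problems by tracking a different, topological invariant: it maintains \emph{two} short arcs $I_1\subset[p_0,p_{0.25}]$ and $I_2\subset[p_{0.5},p_{0.75}]$, each of length $2^{-t-1}$ after phase $t$, whose tree paths $P_T(\phi(I_1))$ and $P_T(\phi(I_2))$ intersect. A short lemma (comparing the perimeter sum to the diagonal sum when the diagonals' tree-paths cross) then gives distortion $\geq \beta/\alpha$ directly, with $\beta\geq 1/4$ the cycle distance between the arcs and $\alpha=2^{-t-1}$ their length --- no non-contraction assumption needed. The inductive step reveals the midpoints of \emph{both} arcs; since the original two tree-paths cross, one of the four pairs of half-paths must still cross, halving the arc lengths and doubling $\beta/\alpha$. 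This ``crossing paths'' invariant is what makes the doubling automatic; your single-edge-expansion invariant carries no comparable structural information about the opposite side of the cycle, which is why your antipodal point $y$ cannot be leveraged.
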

\begin{proof}
The adaptive adversary will choose the points of $C$ to be
exposed. The goal of the adversary will be to ensure that for every $k
\geq 2$, the tree metric induced by the first $2k$ exposed points,
 distorts the metric induced by $d_C$  by at least $2^{k-2}$. 
 
We start with the following simple claims. For a tree $T$, and $x,y \in V[T]$, 
let $P_T(x,y)$ denote the unique path between $x$ and $y$ in 
$T$.  It will be convenient to view $T$ and $P_T$ as continuous objects rather than
abstract graphs.
\begin{claim}\label{fact:tree}
Let $u_1,u_2,u_3,u_4$ be four vertices in a tree $T$. Then, either~ $P(u_1,u_2) \cap P(u_3,u_4) \neq \emptyset$,~ or  $P(u_2,u_3) \cap P(u_1,u_4) \neq \emptyset$.  \qed
\end{claim}

\begin{claim}\label{cl:tree-lb}
Let $p,q,r,s$ be points in a metric space such that $d(p,q), d(r,s) \leq \alpha$ while $d(p,r),$ $d(p,s),$ $d(q,r),$ $d(q,s) \geq \beta$. Assume that $p,q,r,s$ are embedded into a weighted tree $T$ by a mapping 
$\phi$, and it holds that  ~$P_T(\phi(p),\phi(q)) \cap P_T(\phi(r),\phi(s)) \,\neq\, \emptyset$.~ Then, 
$d_T$, the shortest path metric of $T$, distorts $d_C$ by at least $\beta/\alpha$. %\qed
\end{claim}
\begin{proof}
It is readily verified by counting the edges in the suitable paths in $T$ that  
\begin{equation}
\label{eq:88}
%   \sum_{e \in P_T (\phi(p),\phi(s))} w(e) +   \sum_{e \in P_T(\phi(s),\phi(q))} w(e)
%   +\sum_{e \in P_T(q,r)} w(e)  +
%   \sum_{e \in P_T(r,p)} w(e) ~~\leq~~  2\sum_{e \in P_1} w(e) +
%  2\sum_{e \in P_2} w(e) 
d_T(\phi(r),\phi(p)) \;+\; d_T(\phi(p),\phi(s)) \;+\; d_T(\phi(s),\phi(q)) \;+\; d_T(\phi(q),\phi(r))
~~\leq~~
2d_T(\phi(r),\phi(s)) \;+\;  2d_T(\phi(p),\phi(q))\,.
\end{equation} 
The inequality can be strict, as the edges in the intersection
of $P_T(\phi(p),\phi(q))$ and $P_T(\phi(r),\phi(s))$ contribute four rather two times to the right hand side.
On the other hand,
\begin{equation}
\label{eq:99}
d_C(r,p) \;+\; d_C(p,s) \;+\; d_C(s,q) \;+\; d_C(q, r)) ~~\geq~~ {{4\beta} \over {4\alpha}} \cdot \left( 2d_C(r,s) \;+\;  2d_C(p,q) \right)\,.
\end{equation}
Comparing (\ref{eq:88}) and (\ref{eq:99}), and keeping in mind that the left hand side of (\ref{eq:88}) is 
at most the left hand side of (\ref{eq:99}) times the expansion of $\phi$, 
while~ $2d_T(\phi(r),\phi(s)) \;+\;  2d_T(\phi(p),\phi(q))$~ is at least ~$2d_C(r,s) \;+\;  2d_C(p,q)$~ 
divided by the contraction of $\phi$, one concludes that the product of the expansion and the contraction of $\phi$ is at least $\beta/\alpha$.
\end{proof}
The adversary will work in phases, exposing 4 points in the first phase,
and exposing $2$ points in each subsequent phase. Let $T_t$ denote the weighted tree constructed by 
the embedding algorithm after $t$ phases, and let $\phi$ be the corresponding embedding.

Let us parametrize $C$ by choosing an arbitrary reference point $p_0 \in C$, and defining $p_x \in C$ 
to be the point obtained by moving a distance $x$, $0\leq x < 1$, from $p_0$ along the cycle in the clockwise direction. Given a pair of (non-diametral) points $p_x,p_y \in C$, let $I_C[p_x,p_y]$ denote
the shortest length interval containing $p_x, p_y$. 

At the first phase, the adversary exposes the points $p_0, p_{0.25}, p_{0.5}, p_{0.75}$.
By Claim~\ref{fact:tree}, in the corresponding $T_1$, either the $\phi$-images of $I_C[p_0, p_{0.25}]$ 
and $I_C[p_{0.5}, p_{0.75}]$, or the $\phi$-images of $I_C[p_{0.25}, p_{0.5}]$ and 
$I_C[p_{0.75}, p_{0}]$, must intersect. Assume without loss of generality that the former occurs.

The strategy of the adversary is to establish at the end of (each) phase $t$ four points $x_t,x_{t+1},x_{t+2},x_{t+3}$ such that: \\
*~~~ $I_C[x_t, x_{t+1}] \subseteq I_C[p_{0}, p_{0.25}]$,~ $I_C[x_{t+2}, x_{t+3}] \subseteq I_C[p_{0.5}, p_{0.75}]$; \\
*~~~ $|I_C[x_t, x_{t+1}]| = |I_C[x_{t+2}, x_{t+3}]| = 2^{-t-1}$;\\
*~~~ the   $\phi$-images of $I_C[x_t, x_{t+1}]$ and $I_C[x_{t+2}, x_{t+3}]$ intersect in $T_t$.
\\ $\mbox{}$

If the adversary manages to achieve this, then, observing that $d_C(x_i,x_j) \geq 0.25$
for $i\in \{t,t+1\},\,j\in \{t+2,t+3\}$, and using Claim~\ref{cl:tree-lb}, one concludes that the
distortion of $\phi$ after $t$ phases is at least $0.25\cdot 2^{t+1} = 2^{t-1}$. Since the number of
points exposed after $t$ phases is $(2t+2)$, this implies the Theorem.

It remains to specify the strategy of the adversary. For $t=1$, the points $x_1,x_2,x_3,x_4$ are defined to be $p_0,\, p_{0.25},\, p_{0.5},\, p_{0.75}$, respectively. Assume inductively that by the end of phase 
$(t-1)$, there exists the quadruple $x_{t-1},x_{t},x_{t+1},x_{t+2}$ as
required. The new points
to be exposed by the adversary are  the 
median $a$ of $I_C[x_{t-1}, x_{t}],$ and the median $b$ of $I_C[x_{t+1}, x_{t+2}]$, respectively. 

Since the  $\phi$-images of $I_C[x_{t-1}, x_{t}]$ and $I_C[x_{t+1}, x_{t+2}]$ intersect in $T_{t-1}$,
they will intersect in $T_t$ as well. Then, the $\phi$-image of one of the intervals 
$I_C[x_{t-1}, a]$,~ $I_C[a, x_{t}]$, must intersect the $\phi$-image of one of the intervals 
$I_C[x_{t+1}, b]$,~ $I_C[b, x_{t+2}]$. Assuming, e.g., that~ 
$\phi (I_C[x_{t-1}, a]) \,\cap\, \phi (I_C[b, x_{t+2}]) \neq \emptyset$,~ set $x'_t=x_{t-1},\,x'_{t+1}=a,\,
x'_{t+2}=b,\,x'_{t+3}=x_{t+2}$.

We conclude the proof by observing that "cutting" $C$ at the point $p_{0.51}$ to obtain an interval, one obtains an embedding of any metric in $\cal F$ into this interval, and hence into the line, with distortion bounded by $3$.
\end{proof}
%%%%%
%
%
%
%%%%%%
\section{Online embedding into the line}\label{sec:line}
In this section we show that there exists an online algorithm for embedding arbitrary 
$n$-point metric spaces into the line with exponential distortion. Since the line is a very special case of 
a tree, and even the tree metrics do not embed well into it, this can be viewed as a significant strengthening of Theorem~\ref{cl:tree-emb}. The upper bound is, however, weaker.
\begin{theorem}\label{thm:line}
Any metric space $(X,d)$ of size $n$ can be online embedded  into the line
with distortion bounded by $O(n 4^n)$. A priori knowledge of $n$ is not required.
\end{theorem}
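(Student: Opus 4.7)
The plan is to exhibit an online algorithm $\mathcal{A}$ producing an embedding $\phi : X \to \mathbb{R}$ and to bound its multiplicative distortion by $O(n \cdot 4^n)$. Set $\phi(v_1) := 0$. When $v_k$ arrives for $k \geq 2$, compute $u_k := \arg\min_{j < k} d(v_k, v_j)$, let $r_k := d(v_k, u_k)$, and place $\phi(v_k)$ inside the window $I_k := [\phi(u_k) - M r_k,\, \phi(u_k) + M r_k]$, where $M \geq 1$ is a stretching factor to be tuned. The window contains at most $k - 1$ images of previously-placed points, which split it into sub-intervals of total length $\leq 2 M r_k$; by pigeonhole some sub-interval has length $\geq 2 M r_k / k$, and the algorithm places $\phi(v_k)$ at its midpoint. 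Since this rule uses only already-seen data, no a priori knowledge of $n$ is required.

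The core of the argument is a joint induction on the expansion $E_k$ and contraction $D_k$ of the partial embedding after $k$ steps. For \emph{expansion}, the new edge $(u_k, v_k)$ contributes line-length at most $M r_k$, and for any other previously-placed $v_j$, combining $r_k \leq d(v_k, v_j)$ (by minimality of $u_k$), $d(u_k, v_j) \leq 2 d(v_k, v_j)$ (triangle inequality), and the inductive expansion bound yields
\[
|\phi(v_k) - \phi(v_j)| \;\leq\; M r_k + E_{k-1} \cdot d(u_k, v_j) \;\leq\; (M + 2 E_{k-1}) \cdot d(v_k, v_j),
\]
so $E_k \leq 2 E_{k-1} + M$ and $E_n = O(2^n)$. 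For \emph{contraction}, the midpoint placement guarantees $|\phi(v_k) - \phi(v_j)| \geq M r_k / k$ whenever $\phi(v_j) \in I_k$, while $d(v_k, v_j) \leq r_k(1 + M D_{k-1})$ follows from the triangle inequality and the inductive contraction bound; for $\phi(v_j) \notin I_k$ a direct estimate costs at most a constant factor per step. The goal is a recurrence of the form $D_k \leq O(1)\cdot D_{k-1}$ with a mild additive overhead, giving $D_n = O(n \cdot 2^n)$ and hence $E_n \cdot D_n = O(n \cdot 4^n)$.

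The main obstacle will be tightening this contraction recurrence. The straightforward pigeonhole bound yields a multiplicative loss of order $k$ per step and therefore only $D_n = O(n!)$, which is far above the target. I expect the sharp bound to emerge from a refined placement rule---aiming $\phi(v_k)$ specifically away from those obstacles whose metric distance to $v_k$ is already known to be small, rather than from an arbitrary midpoint---together with an amortized charging scheme in which each obstacle can be ``responsible'' for a tight gap only $O(1)$ times during the entire execution. Carrying out this amortization jointly with the in-window versus out-of-window case split, while simultaneously maintaining the inductive expansion bound, is expected to be the most delicate part of the proof.
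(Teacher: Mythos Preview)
Your expansion argument is sound and matches the paper's, but the contraction half is not a proof: you openly concede that your pigeonhole bound gives only $D_n = O(n!)$ and then merely \emph{hope} that some unspecified ``refined placement rule'' and ``amortized charging scheme'' will bring it down to $O(n\cdot 2^n)$. That is precisely the hard part of the theorem, and you have not done it.

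More seriously, the algorithm you describe is unlikely to admit such an analysis. The paper's algorithm is not ``midpoint of the largest gap in a window of fixed radius $M r_k$''. Instead it uses intervals of length $2^{-k} d(z,x_k)$ that \emph{shrink exponentially in $k$}, always placed \emph{to the right} of the father $z$. These two features are exactly what drives the contraction bound: because later intervals are geometrically smaller, once a point $y_k$ is placed in the middle of its interval $I$, all its descendants in the father-tree provably stay trapped inside the right half $I_R$ (the paper's Claim~\ref{cl:53}), and any point from the other branch that was already outside $I$ has all its descendants staying outside as well (Claim~\ref{cl:37}). The contraction bound then comes not from any recurrence $D_k \leq O(1)\cdot D_{k-1}$, but from a \emph{global} argument: for any pair $a,b$, follow their father-chains back to a common ancestor, locate the edge of maximum length $D$ along these chains, and use the nesting to show $|\phi(a)-\phi(b)| \geq D\cdot 2^{-n-1}$ while $d(a,b)\leq nD$. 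Your fixed-scale window with arbitrary-gap midpoint placement destroys this nesting structure, and no per-step amortization of the kind you sketch recovers it.
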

\begin{proof}
The online embedding $\phi$ of $(X,d)$ into the line is constructed in the following inductive manner.
Let $x_k \in X$ be the point exposed at stage $k$.
\begin{definition}
\label{def:line}
Set $\phi(x_1) = 0$. For $k>1$, let $z$ be the closet neighbor of $x_k$ among the previously exposed points. Let $I^{x_k}$ be the leftmost open interval of length $2^{-k}d(z,x_k)$ to the right of $z$ that contains no previously exposed points. Set $\phi(x_k)$ to be the middle of this interval.
\end{definition}
In what follows, we call such $z$ the {\it father} of $x_i$, and denote it by $z=\father(x_k)$.  
Observe that since there are only $(k-1)$ points exposed prior to
$x_k$, there must exist such an empty interval starting at distance at most $(k-2)\cdot 2^{-k}d(x_k,z)$ away from $\phi(z)$.
Therefore,
\begin{equation}
\label{eq:49}
0 ~~\leq~~  \phi(x_k) - \phi(\father(x_k)) ~~\leq~~ (k-1.5)\cdot 2^{-k}d(x_k,\father(x_k))\,.
\end{equation}

{\em \underline{Bounding the expansion:}}\\
%%%%%%%%%%%%%%%%%%%%%%%%%%%%%%
Let $\gamma_k$ denote the expansion of $\phi$ after the embedding of $x_k$.
The upper bound on $\gamma_k$ is obtained by an inductive argument similar to the one
used in Theorem~\ref{cl:tree-emb}.
Let $z=\father(x_k)$, and let $y$ be any previously exposed point. By definition of $z$,~~ 
$d(x_k,z) \leq d(x_k,y)$, and  ~$d(y,z) \leq d(x_k,y) + d(x_k,z) \leq 2d(x_k,y)$.
Combining this with (\ref{eq:49}), one gets
%and using the induction hypothesis about $\gamma_{k-1}$,
%\begin{equation}\begin{array}{l}\end{array}
\[
|\phi(x_k)-\phi(y)| ~\leq~ |\phi(x_k) -\phi(z)|  \;+\; |\phi(z) -\phi(y)| ~~\leq
~~  (k-1.5) \cdot 2^{-k} d(x_k,z) \;+\; \gamma_{k-1}d(y,z)  ~~\leq
\]
\[
\leq ~~ k 2^{-k}\cdot d(x_k,y)  + 2\gamma_{k-1}d(x_k,y) ~~=~~ (k 2^{-k} +   2\gamma_{k-1})\cdot d(x_k,y)\,.  
\]
Thus, one gets a recursive equation~~ $\gamma_k \leq k2^{-k} + 2 \gamma_{k-1}$.~
Together with $\gamma_2 = 1$, this implies~  $\gamma_k \leq  2^{k+1}$.\\ $\mbox{}$
%%%%%%%%%%%%%%%%%%%%%%%%%%%%%%%%%%%%%%%%%

%
%
{\em \underline{Bounding the contraction:}}\\
We start with the following general claim:
\begin{claim}
\label{cl:collective}
Let $I^x$ be the open interval with $\phi(x)$ in the middle, and let
$I_L^x$ and $I_R^x$ be, respectively, its left and right (open)
halves. Assume that at some time in the process of exposure, and after
$x$ is exposed, $I_L^x$ had $p$ hits, and  $I_R^x$ had $r$ hits. (By a hit we mean an assignment
of $\phi(y)$, $y\in X$, to a point in the interval.) Then,\vspace{0.25cm}
 
* $I_L^x$ contains an empty open interval of length \;$\geq |I_L^x|/2^p$ that is adjacent to $\phi(x)$.
\vspace{0.25cm}

* $I_R^x$ contains an empty open interval of length \;$\geq |I_R^x|/(r+1)~\geq~
|I_R^x|/2^r$.
\end{claim}
\begin{proof}
The first statement is easily established using the induction on $p$. 
The second statement follows at once from the observation that any $r$ points in $I_R^x$
split it into at most $(r+1)$ parts. 
\end{proof}
%%%
%%
%
Consider any pair $a,b$ of exposed points. By definition of the 
embedding algorithm, there are two sequences with the same initial element $z=y_1, y_2, \ldots ,
y_q=a$ and $z=w_1, w_2, \ldots, w_f =b$, where $y_i=\father(y_{i+1}), ~w_j
= \father(w_{j+1})$ and $\{y_i\}_{2}^q,\; \{w_j\}_2^f$ are disjoint.
%Assume for the moment that both sequences are of length $>1$, delaying the discussion of what
%happens if this does not hold until after the proof of Claim~\ref{cl:52}.
In what follows, for $x\in X$ it will be convenient to use 
$\ord(x)$ to denote the time of $x$'s exposure. 

Let $\delta_i = d(y_{i-1},y_{i}), i=2, \ldots, q$\; and\; $\zeta_j =
d(w_{j-1},w_{j}), j=2, \ldots, f$. Let $D$ be the maximum among all $\delta_i$'s and $\zeta_j$'s,
and assume without loss of generality that  $D=d(y_{k-1},y_k)$. 
Thus,  $\phi(y_k)$ is the middle of an empty interval 
$I=I^{y_k}$ of length  $2^{-\ord(y_k)}D$ starting at $\phi(y_{k-1})$. Let $I_L = I_L^{y_k}$ and $I_R = I_R^{y_k}$ the left and the right halves of $I$.

\begin{claim}\label{cl:53}
For $i > k$,\;  it holds that~ $\phi(y_{i}) \in I_R$, ~and moreover,~~ 
$\phi(y_{i}) - \phi(y_k) \;\leq\; \frac{|I|}{2} \cdot (1-2^{\ord(y_k) - \ord(y_i)})$.
\end{claim}
\begin{proof}
We claim that for every $y_i$, $i\geq k$, at the end of the stage $\ord(y_i)$,
there exists an empty interval $J^i \subseteq I_L$ of length~ 
$\geq {{|I|} \over {2}} \cdot 2^{\ord(y_k) - \ord(y_i)}$ that lies entirely to the right of 
$\phi(y_i)$.
Indeed, this is the case for $i=k$, with $J^k=I_R$. Assume inductively that the claim 
holds for $(i-1)$. Then, prior to the arrival of $y_i$, the interval $J^{i-1}$ had at most 
$\ord(y_i) - \ord(y_{i-1}) - 1$ hits, and therefore, by Claim~\ref{cl:collective}, it contains
a subinterval of length $|J^{i-1}|/2^{\ord(y_i) - \ord(y_{i-1}) - 1}$. By the inductive
hypothesis and the definition of $I$ and $I^{y_k}$,
\[
{{|J^{i-1}|} \over {2^{\ord(y_i) - \ord(y_{i-1}) -1}}} ~~\geq~~  {{|I|} \over {2}} \cdot {{2^{\ord(y_k) - \ord(y_{i-1})}} \over {2^{\ord(y_i) - \ord(y_{i-1}) - 1}}}
~~=~~
|I|  \cdot 2^{\ord(y_k) - \ord(y_i)} ~~=~~ D\cdot 2^{-\ord(y_i)} ~~\geq~~ |I^{y_i}|\,.
\]
This means that by the time of arrival of $y_i$, there is an empty interval of length 
$~\geq~ |I|  \cdot 2^{\ord(y_k) - \ord(y_i)} ~~\geq~~ |I^{y_i}|$ lying entirely
in $J^{i-1}$, and hence in $I$, and in particular to right of $y_{i-1}$, the father of $y_i$. Therefore, $y_i$ will be mapped to the left (with a possible equality) of the middle of  $J^{i-1}$, and the statement follows with $J^{i} = J_R^{i-1}$. 

To bound $\phi(y_{i}) - \phi(y_k)$, observe that 
\[
\phi(y_{i}) - \phi(y_k) ~\leq~ |I_R| - |J^i|  ~\leq~  \frac{|I|}{2} \cdot (1-2^{\ord(y_k) - \ord(y_i)})\,.
\]
\end{proof}
A very similar argument implies also the following claim. Its proof is very close to that of Claim~\ref{cl:53}, and is omitted.
\begin{claim}\label{cl:37}
Let $\{w_j\}_1^f \subset X$ be the sequence as above, with $w_f =b$, and assume that 
for some $s \leq f$,~ $w_s\in I_L$. Then, $b\in I_L$, and moreover, 
~$\phi(y_k) - \phi(b) \geq  \frac{|I|}{2} \cdot 2^{\ord(y_k) - \ord(b)}$\,. \noproof
\end{claim}
%%%%%%%%%%%%%%%%%%%%%%%%%%%
We are in a position to upper bound the contraction.
Consider the sequences $\{y_i\}_{1}^q,\; \{w_j\}_1^f$ as before, with $y_q=a$, $w_f = b$
Let $w^*$ be the last of $w_i$'s with $\ord(w^*) \leq \ord(y_k)$. Such an element exists since $z$ is the common ancestor of both $y_i$'s and and $w_j$'s. Recall that at the time of exposure of $y_k$, its interval $I$ was completely empty. There two possibilities to consider:

If $w^*$ was mapped to the left of $\phi(y_k)$, then, by Claim~\ref{cl:37}, so is $b$,
and it holds that $\phi(a) - \phi(b) \geq \phi(y_k) - \phi(b) \geq |I|/2\cdot 2^{\ord(y_k) - \ord(b)}$.

It $w^*$ was mapped to the right of $I$, then so is $b$, and by Claim~\ref{cl:53},
$\phi(b) - \phi(a) \geq   |I|/2 \cdot 2^{\ord(y_k) - \ord(a)}$.

In both cases,~ $\phi(b) - \phi(a) \geq   D/2 \cdot 2^{-n}$. However,
\[
d(a,b) ~\leq~ d(a,z) + d(b,z) ~\leq~ \sum_{i=2}^q  d(y_{i-1},y_i) +  \sum_{j=2}^f  d(w_{j-1},w_j) ~\leq~ n D\,.
\]
Combining the two observation, one concludes that the contraction is at most 
$n2^{n+1}$, and hence the distortion is $O(2^n \cdot n 2^n) = O(n4^n)$.
\end{proof}
%%%%%%%%%%%%%%%%%%%%%%%%%%%%%%%%%%%
%
% NEWWWWWWWWWWW
% 
%%%%%%%%%%%%%%%%%%%%%%%%%%%%%%
\section{Online embedding into $\ell_{\infty}$ with $(1+\epsilon)$ distortion}
In this section we study online embeddings of metric spaces $(X,d)$ into
$\ell_\infty$  with distortion bounded by $(1+\epsilon)$, where $\epsilon$ is arbitrarily
small.  The goal is to keep the dimension of the host $\ell_\infty$ space as small as possible. We provide some upper and lower bounds on the dimension in terms of
$|X|$ and $\epsilon$.

It is well known that {\em any} metric space $(X,d)$ with $|X|=n$
points can be isometrically embedded into
$\ell_{\infty}^{n}$. The details of the canonical embedding achieving
this will be discussed in a moment. Somewhat surprisingly, Th.~\ref{lem:old} below implies
that even the metrics on $4$ points cannot be online isometrically embedded into 
$\ell_\infty$ of {\em any finite   dimension}.  Moreover, achieving $(1+\epsilon)$-distortion
poses more difficulties that one could expect.

The main results of this section are:

\begin{theorem}
\label{th:finally}
For any $\epsilon \in (0,1]$, and (unknown in advance) input metric space $(X,d)$
with $|X|=n$, $(X,d)$ can be online  embedded in $\R^r$ equipped with the
$\ell_\infty$ metric with multiplicative distortion bounded by $(1+\epsilon)$, where  
$r=2^{O(n\log n)}\cdot (1+1/\epsilon)^n$.
\end{theorem}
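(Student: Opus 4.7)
Proof plan. The strategy is to online-simulate the classical offline isometric Fréchet embedding $\phi_F(x) = (d(x,x_1), \ldots, d(x,x_n))$ into $\ell_\infty^n$. The online obstruction is that the value $d(x_a, x_j)$ needed in the $j$-th coordinate of $\phi_F(x_a)$ for $a < j$ is unknown at time $a$. I bypass this by pre-allocating, for each time $j \in [n]$ and each plausible discretized ``distance profile'' $\vec v = (v_1, \ldots, v_{j-1})$ of a hypothetical future $x_j$ to its predecessors, a distinct coordinate $\mathrm{dim}(j, \vec v)$ of $\ell_\infty$. When $x_j$ actually arrives with true discretized profile $\vec v^{*}$, the coordinate $\mathrm{dim}(j, \vec v^{*})$ is \emph{activated} and plays the role of the $j$-th Fréchet coordinate; all other coordinates $\mathrm{dim}(j, \vec v)$ with $\vec v \neq \vec v^{*}$ are declared \emph{inactive} but are still forced to remain $1$-Lipschitz, so they cannot violate non-expansion.

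Setup. First round every distance to a multiplicative $(1+\epsilon')$-grid with $\epsilon' = \Theta(\epsilon)$; a standard perturbation argument shows this loses only a $(1+O(\epsilon))$ factor in distortion. The online assignment rule inside $\mathrm{dim}(j, \vec v)$ is then the following. At time $a < j$, commit $\phi(x_a)[\mathrm{dim}(j,\vec v)] := v_a$ provided $(v_1, \ldots, v_a)$ is $1$-Lipschitz with respect to the observed metric on $\{x_1, \ldots, x_a\}$; otherwise set the value to a McShane extension of the values already committed in that coordinate, which preserves its $1$-Lipschitz property. At time $j$, if $\vec v = \vec v^{*}$ mark the coordinate active and set $\phi(x_j) := 0$; else inactive and McShane-extend. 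At time $b > j$, in active coordinates set $\phi(x_b) := d(x_b, x_j)$ (continuing the Fréchet pattern), and in inactive coordinates McShane-extend.

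Distortion verification is then routine: non-expansion follows because every coordinate is $1$-Lipschitz with respect to $d$ by construction; non-contraction follows because, for any pair $(x_a, x_b)$ with $a < b$, the coordinate $\mathrm{dim}(b, \vec v^{*})$ corresponding to the true profile $\vec v^{*} = (d(x_b, x_1), \ldots, d(x_b, x_{b-1}))$ of $x_b$ is, by our enumeration, present; the triangle inequality $|d(x_b,x_\alpha) - d(x_b,x_a)| \le d(x_\alpha,x_a)$ guarantees that $\vec v^{*}$ remains Lipschitz-consistent at every intermediate time, so $v^{*}_a$ is successfully committed at time $a$; and the coordinate is activated at time $b$ yielding separation exactly $|v^{*}_a - 0| = d(x_a,x_b)$ (up to the grid factor). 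Overall distortion is $(1+\epsilon')^2 \le 1+\epsilon$.

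The main technical obstacle is the counting step: bounding the number of allocated coordinates by $r = 2^{O(n\log n)}(1+1/\epsilon)^n$. The naive count is infinite because the entry $v_1$ is a priori unbounded. The fix is a scale/shape decomposition of the set of Lipschitz profiles on the grid: partition profiles by an overall scale parameter (a single grid value), for each scale enumerate only the $(1+1/\epsilon)^{O(n)}$ ``normalized shapes'' (Lipschitz profiles on $n-1$ grid entries confined to a bounded range), and then prune scales that are incompatible with any metric extending the currently-observed sub-metric (via triangle-inequality constraints between the profile, the prior points, and the committed coordinate values), leaving only $2^{O(n\log n)}$ combinatorially distinct scale patterns. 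Carrying out this packing/pruning argument, and verifying simultaneous consistency of the McShane extensions across all coordinates at each step, constitute the bulk of the technical work.
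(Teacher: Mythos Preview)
Your plan—simulate the Fr\'echet embedding by pre-allocating one $\ell_\infty$ coordinate per discretized hypothetical distance-profile of a future point—is natural, and the distortion verification would indeed be routine. The genuine gap is the one you flag: the dimension count, and the sketched fix does not close it. The scale/shape decomposition cannot yield a bound independent of the aspect ratio $\Delta$ of $(X,d)$. After dividing out a single overall scale, the remaining normalized entries of a profile can still be as small as $\Theta(1/\Delta)$ (take $x_j$ near one cluster in a metric containing two far-apart clusters), and on a multiplicative $(1+\epsilon')$-grid that is $\Theta((\log\Delta)/\epsilon')$ values per entry; so the number of shapes is at least $((\log\Delta)/\epsilon')^{\Omega(n)}$, not $(1/\epsilon)^{O(n)}$. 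Your pruning step does not help: every coordinate must already be present at time $1$, since you commit $\phi(x_1)[\mathrm{dim}(j,\vec v)]=v_1$ then, and at time $1$ no distance has been revealed, so no profile is incompatible with anything. The triangle-inequality constraints learned later bound only the \emph{additive} spread $|v_a-v_b|\le d(x_a,x_b)$ of a profile, not the multiplicative range of its entries, which is what a multiplicative-grid count depends on. (A smaller issue: rounding each $v_a$ independently to a multiplicative grid can destroy the $1$-Lipschitz property of the true profile, so the claim that $\vec v^{*}$ stays Lipschitz-consistent at every intermediate step also needs repair.)

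The paper circumvents the aspect-ratio obstruction by a genuinely different construction. Rather than guessing future distance-profiles on a fixed grid, it enumerates a discretized family of \emph{all} $1$-Lipschitz line embeddings of the points seen so far, grown one point at a time: when $x_t$ arrives, each partial embedding $\varphi^{t-1}$ branches over the values in the feasible interval $I(x_t,\varphi^{t-1})$ intersected with $\bigcup_{i<t}\delta\,d(x_i,x_t)\cdot\mathbb{Z}$. Because the grid spacing is tied to the distances $d(x_i,x_t)$ \emph{just revealed}, and because $|I(x_t,\varphi^{t-1})|\le 2\min_{i<t} d(x_i,x_t)$, the branching factor is $O(t/\delta)$ independently of $\Delta$, giving the dimension bound at once. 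The cost of this approach is that one must then prove that some branch approximates every Fr\'echet coordinate $z\mapsto d(z,x)$ to within $(1+\epsilon)$; that weak-approximation lemma is where essentially all the technical work in the paper's argument lies.
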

\begin{theorem}
\label{lem:old} 
Consider any uniform online embedding procedure for 1-Lip embedding of
metric spaces $(X,d)$ with $|X|=4$, into $\ell_\infty^k$. There exists $(X_0,d_0)$, on which the resulting embedding $\Phi$ incurs a contraction of \; $(1+ {1 \over {2k+1}})$.  Consequently, to ensure a distortion  of at most 
$(1+\epsilon)$, the dimension must be $\Omega(\epsilon^{-1})$.
\end{theorem}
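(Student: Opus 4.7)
The plan is to construct, given any online 1-Lipschitz algorithm $\Phi$ into $\ell_\infty^k$, an adaptive 4-point metric forcing contraction at least $1+1/(2k+1)$, using a two-stage adversary that exploits the limited number of ``signed extremal directions'' in $\ell_\infty^k$. In Stage~1 the adversary reveals $p_1,p_2,p_3$ with $d(p_1,p_2)=d(p_2,p_3)=1$ and $d(p_1,p_3)=2$. If this triple already suffers contraction at least $1+1/(2k+1)$, we are done; otherwise the 1-Lipschitz property and the lower bound $\|\Phi(p_1)-\Phi(p_3)\|_\infty \ge 2/(1+1/(2k+1))$ force a coordinate $i^\ast$ in which, after translating so that $\Phi(p_2)=0$, both $\Phi(p_1)_{i^\ast}$ and $-\Phi(p_3)_{i^\ast}$ lie within $O(1/(2k+1))$ of $+1$.

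In Stage~2 the adversary reveals $p_4$ from the one-parameter family of valid extensions $d(p_2,p_4)=1$, $d(p_1,p_4)=1+t$, $d(p_3,p_4)=1-t$, for $t\in(0,1)$. The 1-Lipschitz conditions on the pairs $(p_1,p_4)$ and $(p_3,p_4)$ pin $\Phi(p_4)_{i^\ast}$ essentially to $-t$, so the $i^\ast$-coordinate contributes only $t$ to $\|\Phi(p_4)\|_\infty$. To keep the contraction on the pair $(p_2,p_4)$ below $1+1/(2k+1)$, the algorithm must have some other coordinate $j(t)\in\{1,\ldots,k\}\setminus\{i^\ast\}$ with sign $\sigma(t)\in\{\pm1\}$ such that $\Phi(p_4)_{j(t)}$ is close to $\sigma(t)$.

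For Stage~3, partition $(0,1)$ into $2k+1$ equally spaced grid points; there are only $2(k-1)$ pairs $(j,\sigma)$ available, so two distinct grid values $t_1<t_2$ (with $t_2-t_1 = \Theta(1/(2k+1))$) must share the same $(j,\sigma)$. In this shared coordinate, the 1-Lipschitz constraint $|\Phi(p_4)_j-\Phi(p_3)_j|\le 1-t$ pins $\Phi(p_3)_j$ from one side around $\sigma$, while the non-contraction demand on the pair $(p_3,p_4)$---whose $i^\ast$-coordinate contribution is only $1-t$ and hence already saturated---forces the witness of $\|\Phi(p_4)-\Phi(p_3)\|_\infty$ to come from another coordinate, again pinning $\Phi(p_3)_j$, but from the other side. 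The two demands from $t_1$ and $t_2$ on the \emph{same} (Stage-1 fixed) value $\Phi(p_3)_j$ are incompatible by a margin of $\Theta(1/(2k+1))$, giving the claimed contraction of $1+1/(2k+1)$ on the pair $(p_3,p_4)$. The dimension corollary is immediate: $(1+\epsilon)$-distortion forces $1/(2k+1)\le\epsilon$, hence $k=\Omega(1/\epsilon)$.

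The main obstacle is the quantitative pigeonhole in Stage~3: carefully tracking the slack budget from Stage~1 as it propagates through the $\ell_\infty$-ball constraints on $\Phi(p_4)$, and ensuring that the shared-coordinate argument delivers an incompatibility of precisely magnitude $1/(2k+1)$ rather than a weaker constant. A clean execution may require a slightly different parametric family (e.g., symmetrised over signs of $t$, or with two adversarial parameters), and careful accounting of which coordinate carries each pairwise lower bound.
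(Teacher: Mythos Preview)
Your proposal has a genuine and fatal gap: the adversary family you describe is too narrow to force any lower bound at all. Concretely, the metrics you use --- $d(p_1,p_2)=d(p_2,p_3)=d(p_2,p_4)=1$, $d(p_1,p_3)=2$, $d(p_1,p_4)=1+t$, $d(p_3,p_4)=1-t$ --- are \emph{isometrically} embeddable in $\ell_\infty^2$ by a single online algorithm, uniformly in $t\in[0,1]$. Take $\Phi(p_1)=(1,1)$, $\Phi(p_2)=(0,0)$, $\Phi(p_3)=(-1,1)$, and upon seeing $p_4$ set $\Phi(p_4)=(-t,1)$. One checks directly that all six pairwise $\ell_\infty$-distances equal the prescribed ones. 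Thus no contraction can be forced on this family, regardless of $k$, and Stages~2--3 of your argument collapse: the algorithm never needs a ``witness coordinate'' $j(t)\neq i^\ast$, because the second coordinate already witnesses $d(p_2,p_4)=1$ for every $t$. The underlying error in Stage~3 is that the two constraints you extract on $\Phi(p_3)_j$ from $t_1$ and $t_2$ are both \emph{one-sided} (both lower bounds, via $|\Phi(p_4)_j-\Phi(p_3)_j|\le 1-t$ with $\Phi(p_4)_j\approx\sigma$), so they cannot be incompatible; your claim that the non-contraction requirement on $(p_3,p_4)$ pins $\Phi(p_3)_j$ ``from the other side'' is unsupported, and in the example above the pair $(p_3,p_4)$ is witnessed perfectly by coordinate~$i^\ast$.

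The paper's argument is structurally different and avoids this trap by committing to \emph{fewer} points before adapting. The adversary exposes only two antipodal points $a,b$ with $d(a,b)=1$; the algorithm's responses $\phi_1(b),\ldots,\phi_k(b)\in[0,1]$ (after normalizing $\phi_i(a)=0$) partition $[0,1]$ into at most $k+1$ intervals, so some midpoint $p$ is at distance $\ge \tfrac{1}{2(k+1)}$ from every $\phi_i(b)$. The adversary then sets $\alpha=(1-p)/2$ and exposes the second antipodal pair $c,q$ from the cycle metric, with $d(a,c)=d(b,q)=\alpha$ and $d(c,q)=1$. A direct coordinate-wise case analysis (splitting on whether $\phi_i(b)<p$ or $\phi_i(b)>p$) shows $|\phi_i(q)-\phi_i(c)|\le 1-\tfrac{1}{2(k+1)}$ in \emph{every} coordinate, forcing contraction $\ge 1+\tfrac{1}{2k+1}$ on the pair $(c,q)$. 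The key structural point you are missing is that exposing a third collinear point in Stage~1 hands the algorithm a free coordinate (the one witnessing $d(p_1,p_3)=2$) which it can reuse; exposing only two points first, and then two more simultaneously, denies the algorithm this advantage.
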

In what follows it will be convenient to identify an embedding $\Phi$ of a metric space $(X,d)$ into
 $\ell_{\infty}^{r}$ with a family\, $\Phi = \{\varphi_i\}_1^r$\, of embeddings of $(X,d)$ into
the line, where  $\varphi_i$ is the $i$-th coordinate of $\Phi$, and
$\|\Phi(x) - \Phi(y)\|_\infty = \max_i |\varphi_i(x) - \varphi_i(y)|$. 
Observe that $\Phi$ is $1$-Lip iff each $\varphi_i$ is. The canonical isometric embedding $\Phi$ of 
$(X,d)$ into $\ell_\infty^{n}$, $n=|X|$, can be described in these terms as~
$\Phi = \{\varphi_x\}_{x\in X}$,
where for each $x \in X$, $\varphi_x:X \rightarrow \R$ is defined by $\varphi_x(z) = d(x,z)$.
\subsection{Proof of Theorem \ref{th:finally}: The online embedding procedure and its analysis}
Let $\langle x_1,x_2,\ldots,x_n \rangle$ be the {\em exposure order} on $X$, specifying
the order of appearances of the points from $X$. 

Define $\Phi^t = \{\varphi^t_i\}$ as the family of $1$-Lip embeddings into the line of 
$X^t = \langle x_1,x_2,\ldots,x_t\rangle$. It will be convenient to describe the embedding procedure 
in a tree-like fashion, where the embeddings of $\Phi^t$ correspond to the vertices of the $(t-1)$'th level 
of the following tree. The root, $\Phi^1$, is a single embedding $\varphi^1$ sending $x_1$ to $0$. 
Once $x_t$ is exposed, each embedding $\varphi^{t-1} \in \Phi^{t-1}$,  splits (i.e., is extended) into a number of embeddings $\varphi^{t}\in \Phi^{t}$ according to the value assigned to $\varphi^{t}(x_{t})$. 
The values previously assigned to $X^{t-1}$ remain unchanged, i.e., for every child $\varphi^{t}$ of $\varphi^{t-1}$, it holds that $\varphi^{t}|_{X^{t-1}} = \varphi^{t-1}|_{X^{t-1}}$. Define $\Phi$ as $\Phi^n$. In the actual implementation, since splitting is impossible to implement online, all $\Phi^t$'s will be of the same size $|\Phi^n|$, containing suitably many copies of each embedding $\varphi^t$.

It is well known that a 1-Lip embedding $\phi^{t-1}$ of $X^{t-1}$ into the line can always be to extended 
to a 1-Lip embedding $\phi^{t}$ of $X^t$ into the line. More concretely, 
define the interval $I(x_t, \phi^{t-1}) = [\max_{y\in X^{t-1}} \phi^{t-1}(y) - d(y,x_t), \min_{y\in X^{t-1}} \phi^{t-1}(y) + d(y,x_t)]$. This interval is non-empty, since for any $y_1,y_2 \in X^{t-1}$ it holds that 
\[
[\phi^{t-1}(y_2) + d(y_2,x_t)]\, - \,[\phi^{t-1}(y_1) - d(y_1,x_t)] ~=~  [d(y_1,x_t) + d(y_2,x_t)] \,+\, [\phi^{t-1}(y_2) - \phi^{t-1}(y_1)] ~\geq
\]
%\begin{equation}
%\label{eq:1} 
\[
 d(y_2,y_1) \,-\, |\phi^{t-1}(y_2) - \phi^{t-1}(y_1)| ~\geq~ 0\,.
\]
%\end{equation}
It it readily verified that an assignment $\phi^t(x_t)=y$ results in a 1-Lip extension of $\phi^{t-1}$ iff
$y \in I(x_t, \phi^{t-1})$. 

Now, in order to describe the online embedding procedure, it suffices the specify the exact manner in which 
the children $\{\varphi^{t}\}$ of each $\varphi^{t-1} \in \Phi^{t-1}$ are produced. 
%Following is the formal definition. % (Later we shall describe a small modification, resulting in a better performance.)
%\begin{definition}
\subsubsection{The extension of $\varphi^{t-1}$ by $x_t$}
\label{def:constr} 
Let $\delta \in (0,1]$ be a fixed parameter, and let $X^{t-1}$ and $\varphi^{t-1}$ be as above. 
\begin{definition}
For all $i\leq t$ define the {\em scale} ~$s(x_i,x_t) = \delta \cdot d(x_i,x_t)$, and the corresponding infinite set~ $S(x_i, x_t) =  s(x_i,x_t) \cdot \Z$.
\\$\mbox{}$~~~~
Let $A^t   ~=~ \cup_{i<t}\, S(x_i,x_t)  ~\cup~ \left\{\varphi^{t-1}(x_i)\right\}_{i=1}^{t-1} $.~
$A^t$ will be called the {\em admissible range} of $\varphi^t(x_t)$, meaning that $\varphi^t(x_t)$
may only take values from $A^t$.
\\$\mbox{}$~~~~
Finally, define $P^t \subset A^t$, the {\em feasible range} of $\varphi^t(x_t)$, by ~$P^t  ~=~ A^t ~\cap~  I(x_t, \varphi^{t-1})$, as the set of the all the actual values that $\varphi^t(x_t)$ will take.
For $t=1$, let $P^t = \{0\}$.
\end{definition}
\begin{definition}
\label{def:F}
For each point $p \in P^t$, create an extension $\varphi^{t}$ of $\varphi^{t-1}$ by assigning 
$\varphi^{t}(x_t)=p$. If  $P^t$ is empty (a situation not to be encountered in the subsequent analysis), create an extension $\varphi^{t}$ of $\varphi^{t-1}$ by assigning $\varphi^{t}(x_t)=p$~
 for an arbitrary point \,$p \in I(x_t, \varphi^{t-1})$.  
\end{definition}
By definition of $I(x_t, \varphi^{t-1})$, each $\varphi^{t}$ is indeed a 1-Lip embedding of $X^t$ into the line.

Observe also that\,  $I(x_t, \varphi^{t-1})$ is contained in\, $[\varphi^{t-1}(x_i) -d(x_i,x_t),\;
\varphi^{t-1}(x_i) + d(x_i,x_t)]$,\, and so its length is at most $2 d(x_i,x_t)$. Consequently,\, 
$|S(x_i,x_t) \cap I(x_t, \varphi^{t-1})| \leq 2\delta^{-1}+1$,
implying that the number of different extensions $\varphi^{t}$ of
$\varphi^{t-1}$ is at most~  $(t-1)(2\delta^{-1} + 2)$.  
Thus,\; $|\Phi^n| \leq (n-1)! \cdot (2\delta^{-1} +2)^{n-1}$.

This completely defines the online embedding procedure. Observe that $x_1$ will be mapped 
to $0$ by any $\varphi \in \Phi$.
\subsubsection{The weak additive approximation property} 
%%%%%%%%%%%%%%%%% 
%
In this section we state and prove the key property of the above embedding $\Phi$. 

Let $\psi: X \rightarrow \R$ be an arbitrary 1-Lip embedding of $(X,d)$ into the line. 
Let $\langle  z_{1}, z_{2}, \ldots, z_n \rangle$ be a reordering of
$X$ such that $\psi(z_{i}) \leq \psi(z_{i+1}), ~ i=1, \ldots ,n-1$. We
refer to this order as the  {\em $z$-order} on $X$. This to distinguish it from the exposure
order on $X$, to be also called the $x$-order.

For $i < j$, let $\D{ij} = \max_{x,y \in \langle z_i, \cdots,z_j\rangle} d(x,y)$.
\begin{lemma} 
\label{lm:main}
For any 1-Lip $\psi: X \rightarrow \R$, and any pair of indices $i < j$, there exists 
$\widetilde{\psi} \in \Phi$  such that,
$$\wp(z_j) - \wp(z_i) ~~\geq~~ \psi(z_j) - \psi (z_i) - 10n^2 \delta \cdot \D{ij}\,,
%\cdot g(n) ,~  \mbox{   with } ~g(n)= 3^{n}\cdot n.
$$ 
where $\delta$ is the parameter used in the construction of $\Phi$.
\end{lemma}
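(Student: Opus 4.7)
The plan is to construct a specific $\widetilde\psi \in \Phi$ that closely tracks $\psi$, and to bound the resulting approximation error on the gap $\psi(z_j) - \psi(z_i)$. Since both sides of the inequality depend only on this gap and are shift-invariant, I would first shift $\psi$ so that $\psi(x_1)=0$. Then, at each stage $t\ge 2$, I choose $\widetilde\psi(x_t) \in P^t$ as close as possible to the target value $v_t = \psi(x_t)$; the resulting sequence of choices selects one branch in the tree $\Phi$.

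The analysis tracks the signed error $\epsilon_t := \widetilde\psi(x_t) - \psi(x_t)$. At each step two contributions arise. \emph{Projection:} the target $v_t$ lies in $I(x_t,\psi|_{X^{t-1}})$ by the 1-Lip property of $\psi$, but may lie outside the slightly shifted interval $I(x_t,\widetilde\psi)$; a triangle inequality using 1-Lip of both $\psi$ and $\widetilde\psi$ shows that the projection of $v_t$ onto $I(x_t,\widetilde\psi)$ costs at most $\max_{a<t}|\epsilon_a|$. \emph{Rounding:} for an appropriately chosen parent $x_{s_t} \in X^{t-1}$, one can round the projected point to an admissible $A^t$-point lying inside $I(x_t,\widetilde\psi)$ at cost at most $\delta\, d(x_{s_t},x_t)$, using the lattice $S(x_{s_t},x_t) \subseteq A^t$ of spacing $\delta d(x_{s_t},x_t)$. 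Combining yields the recursion $|\epsilon_t| \le \max_{a<t}|\epsilon_a| + \delta\, d(x_{s_t},x_t)$.

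To obtain a bound scaling with $D_{ij}$ rather than $\operatorname{diam}(X)$, I would restrict the parent $x_{s_t}$ to lie inside the block $B := \{z_i,\ldots,z_j\}$ whenever $x_t \in B$ and some block member has already been exposed. Under this policy every block step contributes at most $\delta D_{ij}$ to rounding. Passing to the \emph{relative} errors $\tilde\epsilon_t := \epsilon_t - \epsilon_{t_1}$, where $t_1$ is the exposure time of the first block point, cancels the absolute drift accumulated before the block becomes relevant. A chaining argument then shows that $|\tilde\epsilon_t|$ grows by at most $\delta D_{ij}$ per block step, so over the at most $n$ block steps we obtain $|\tilde\epsilon_{z_j} - \tilde\epsilon_{z_i}| = O(n^2\delta D_{ij})$. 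Substituting into $\widetilde\psi(z_j) - \widetilde\psi(z_i) = (\psi(z_j) - \psi(z_i)) + (\tilde\epsilon_{z_j}-\tilde\epsilon_{z_i})$ gives the stated inequality with constants absorbed into the leading $10$.

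The main obstacle will be controlling the projection cost at block steps: the bound $\max_{a<t}|\epsilon_a|$ could in principle mix block and non-block errors, and non-block $\widetilde\psi$-values are only constrained to be feasible. Overcoming this requires either exploiting that constraints from far-away non-block points are geometrically weak (their feasibility intervals have width $2d(x_a,x_t)$, so their projection contribution is small), or arranging that the non-block extensions of $\widetilde\psi$ do not inflate the running maximum by more than $O(n\delta D_{ij})$. Handling the ``fallback'' cases where no lattice point of $S(x_{s_t},x_t)$ lies inside $I(x_t,\widetilde\psi)$ by instead picking an admissible endpoint $\widetilde\psi(x_a) \in A^t$, without harming the budget, is the technical heart of the argument; the factor $n^2$ in the stated bound accommodates the slack in this bookkeeping.
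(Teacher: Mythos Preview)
Your approach is structurally different from the paper's, and the obstacle you flag at the end is a genuine gap that your outline does not close.

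The paper does \emph{not} track $\psi$ greedily along the exposure order. Instead it works in the $z$-order induced by $\psi$: it defines admissible ``targets'' $t(z_k)$ on the lattice $S(z_f,z_k)$ with a built-in progressive leftward shift of $\delta(k-f-1)D_{ij}$, then corrects them by pull-backs/push-forwards to preserve the $z$-order. This design guarantees directly that consecutive $z$-gaps satisfy $\wp(z_k)-\wp(z_{k-1}) \le \psi(z_k)-\psi(z_{k-1})$, which is what forces $\wp$ to be $1$-Lip. To control the diameter that enters the error, the paper does \emph{not} restrict parents to a block; it first replaces $\psi$ by a modified $1$-Lip map $\psi^*$ obtained through a sequence of ``expanded (truncated) foldings'' about $L=\psi(z_i)$ and $R=\psi(z_j)$. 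These foldings drag every point either into $[L,R]$ or to an incidental value, so that the $d$-diameter of the relevant set becomes $O(n)\cdot D_{ij}$, after which the base-case argument applies.

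Your recursion $|\epsilon_t|\le \max_{a<t}|\epsilon_a| + \delta\, d(x_{s_t},x_t)$ is correct as stated, but it is a recursion in the \emph{absolute} errors. At a block step $t$ the projection onto $I(x_t,\widetilde\psi)$ is governed by the tight constraint $\widetilde\psi(x_a)\pm d(x_a,x_t)$, whose displacement from the $\psi$-version is exactly $\epsilon_a$, not $\tilde\epsilon_a=\epsilon_a-\epsilon_{t_1}$. Hence the running maximum that feeds back into block steps includes the errors accumulated at non-block steps, and those can be as large as $n\delta\cdot\mathrm{diam}(X)$, with no relation to $D_{ij}$. The heuristic that ``far-away non-block constraints are geometrically weak'' is not enough: a single non-block point $x_a$ with $\psi(x_t)$ near the boundary of its $\psi$-feasibility window will shift that window by the full $\epsilon_a$, regardless of $d(x_a,x_t)$. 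Subtracting $\epsilon_{t_1}$ does not help, because the constraints do not translate with it. There is also a secondary issue you note but do not resolve: after projection, the nearest lattice point of $S(x_{s_t},x_t)$ need not lie in $I(x_t,\widetilde\psi)$, and falling back to an endpoint value $\widetilde\psi(x_a)\in A^t$ can reintroduce an uncontrolled $|\epsilon_a|$. The paper's folding step is precisely the device that eliminates the dependence on non-block geometry; without an analogue of it, your error budget cannot be confined to $O(n^2\delta D_{ij})$.
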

\begin{proof} 
We proceed by considering increasingly more complex special cases. {\em Case 1} is in a sense fundamental, whereas other cases will reduce to it via a manipulation of the given 
$\psi$, and its replacement by a suitable 1-Lip $\psi^*$ that preserves the values of $\psi$ on
$z_i$ and $z_j$. 

The points $z_i, z_j$ and their values $L = \psi(z_i)$ and $R=\psi(z_j)$ are fixed, and will
not be effected by the changes in $\psi$, and, respectively, in the $z$-order (although the old indices $i,j$ may become irrelevant). 

Let $S^\psi$ be the set containing the members of the $z$-order interval 
$\langle z_i, z_{i+1},\ldots,z_{j} \rangle$. Observe that the interval
$[L,R]$ may contain images of points $x\not\in S^\psi$, e.g., some
$z_k$ with $k<i$, but $\psi(z_k)=\psi(z_i)$. Such points will be
called {\em incidental} points, and typically will require a special attention.

Instead of $\D{ij}$, the $d$-diameter of $S^\psi$, it will be sometimes more convenient to use the notation $D^\psi$. Let $z_f$, or  $z_f^\psi$, denote the first  point {\em in the exposure order} among $S^\psi$. A change in the $z$-order may cause a change in $z^\psi_f$, as well as  in $S^\psi$ and $D^\psi$; the latter two may only grow.
\vspace{0.3cm}\\ $\mbox{}$
\underline{\bf * Case I:}~ $z_i = z_f =x_1,\;z_j=z_n$. \vspace{0.1cm}\\ Assume w.l.o.g., that $\psi(x_1)=0$, as satisfied by all $\varphi \in \Phi$. Otherwise,
consider $\psi - \psi(x_1)$ instead of $\psi$. \\
We start with setting {\em targets} $t(z_k) \in \R$ for every $z_k$,
being the candidates for the value of $\wp(z_k)$. The logic behind this definition
will become apparent in Claim~\ref{obs:11}.
\begin{definition}
\label{def:target}
For $1 \leq k\leq f$, set \,$t(z_k) = 0$. ~ For $k > f$,
set \,$t(z_k)$\, to be the largest value in $S(z_f,z_k)$ such that
\[
t(z_k) ~\leq~ \psi(z_k) -  \delta\cdot (k-f-1)\,\D{ij}\,.
\]
In particular, for $k \geq f$, the further is $\psi(z_k)$ from $\psi(z_f)$, the more
will $t(z_k)$ shift away from $\psi(z_k)$ to the left towards $0$. 
\end{definition}
Observe that the targets $t(z_k)$ are admissible values for $z_k$, since
the value $0$ is always admissible, and for every $k \geq f$,~ $t(z_k) \in S(z_f,z_k)$. 
However, in general, $t(z_k)$ are not feasible.
\ignore{%ignore2
%Some definitions first. For each $t$, let $X^{t-1}_-$ and $X^{t-1}_+$ be subsets of $X^{t-1}$, where
%the former contains $x_i$'s strictly preceding $x_t$ in the $z$-order, while the latter contains $x_i$'s strictly succeeding $x_t$.~~ Switching to the $z$-notation, we define $\widetilde{z_k}$'s, the respective {\em targets} of $z_k$'s in $\R$, by:
}%ignore2
%%%%%%%%
To address this problem, $\wp$ will be defined in an inductive (offline) manner according to the 
exposure order (rather than the $z$-order) of $X$, taking care to preserve the 
$z$-order, i.e.,~ if $z_i$ precedes $z_j$ in the $z$-order, then \;$\wp(z_i) \leq \wp(z_j)$.

For $k >  f$, let $z_k^-$ and $z_k^+$ be, respectively, the maximal point in $X$  that precedes $z_k$, and the minimal point in $X$ that succeeds $z_k$, both according to the exposure order.
While $z_k^+$ may not be defined, $z_k^-$  always is, as $k < f$. The formal definition of 
$\wp$ is as follows:
\begin{definition}
\label{def:appr}
For $k \leq f$, set $\wp(z_k)=0$. \\
For $k \geq f$, by the inductive assumption about the preservation of the $z$-order,~
it holds that~ $\wp(z_k^-) \leq \wp(z_k^+)$. The value of $\wp(z_k)$ is defined according 
to how the value of  $t(z_k)$ relates to those of $\wp(z_k^-)$ and $\wp(z_k^+)$:

\noindent
{\bf if}~ $\wp(z_k^-) \;\leq\; t(z_k) \;\leq\; \wp(z_k^+)$ ~{\bf then}~
$\wp(z_k) = t(z_k)\;;$\\
{\bf if}~ $t(z_k) \;<\; \wp(z_k^-)$ ~{\bf then}~
$\wp(z_k) = \wp(z_k^-)\;;~ ~ ~ $  {\em (* "pull back" of \,$t(z_k)$\, away from $0$~*)}  \\
{\bf if}~ $t(z_k) \;>\; \wp(z_k^+)$ ~{\bf then}~
$\wp(z_k) = \wp(z_k^+)\,.~ ~ ~ ~$  {\em (* "push forward" of \,$t(z_k)$\, towards $0$~*)}  \\
%%%%%%%%%%%%%%  old ....
\ignore{%ignore3
  $\wp(x_1)~\leftarrow~ 0$;\\
$\mbox{}$\hspace{0.5cm}
   {\bf for} $t=2$\; {\bf to}\; $n$\; {\bf do}\\
   $\mbox{}$\hspace{1cm}
  $m^{t-1}_-  ~\leftarrow~ \max\{\wp(x_i)\;|\; x_i \in X_-^{t-1}\}$;\\
   $\mbox{}$\hspace{1cm}
   $m^{t-1}_+  ~\leftarrow~ \min\{\wp(x_i)\;|\; x_i \in X_+^{t-1}\}$;~~~~~~~~~~~~~~~ {\em (*~~ it will always hold that ~$m^{t-1}_- \,\leq\, m^{t-1}_+$~~ *)}\\
   $\mbox{}$\hspace{1cm}
   {\bf if}~ $\widetilde{x_t} < m^{t-1}_- $~~ {\bf set}~~ $\wp(x_t) ~\leftarrow~ m^{t-1}_- $;~~~~~~~~~~~~~~ {\em (*~~ "pull back" ~~*)}\\
   $\mbox{}$\hspace{1cm}
   {\bf if}~ $\widetilde{x_t} > m^{t-1}_+ $~~ {\bf set}~~ $\wp(x_t) ~\leftarrow~ m^{t-1}_+ $;~~~~~~~~~~~~~~ {\em (*~~ "push forward" ~~*)}\\
    $\mbox{}$\hspace{1cm}
   {\bf if}~ $\widetilde{x_t} \in  [m^{t-1}_-,\, m^{t-1}_+]$ ~~ {\bf set}~~ $\wp(x_t) ~\leftarrow~ \widetilde{x_t}$\.,~~~~~ {\em (*~~ "free jump" ~~*)}\\
    $\mbox{}$  \hspace{2cm}
}%ignore3
%%%%%%%%%%%%%%
\end{definition}
Clearly, it holds that~ $\wp(z_k^-) \leq \wp(z_k) \leq \wp(z_k^+)$,~ and thus the $z$-order is
indeed preserved.

Next, we claim that $\wp \in \Phi$. Observe that $x_1$ precedes or is equal to $z_f$ in the 
$z$-order, and thus $\wp(x_1)=0$, as required. For every $z_k \neq x_1$, the value $\wp(z_k)$ as defined above is admissibles for $z_k$, since $t(z_k)$ was such. It remains to show that it is also feasible, which in the present context is the same as showing that  $\wp$ is 1-Lip.  
\begin{claim}\label{obs:11}
For every $k >1, ~$ $~0~\leq~ \wp(z_k) - \wp(z_{k-1}) ~\leq~
\psi(z_k) - \psi(z_{k-1})$.   
\end{claim}
\begin{proof}
If $k \leq f$, the statement is trivial, since in this case $\wp(z_k) = \wp(z_{k-1}) = 0$.
Consider $k > f$. If either $\wp(z_k)$ was defined by a "pull back", or $\wp(z_{k-1})$ was defined by
a "push forward", then, due to the $z$-order preservation,  $\wp(z_k) = \wp(z_{k-1})$,
and we are done.  Otherwise, it must hold that 
$\wp(z_k) \leq t(z_k)$, while $\wp(z_{k-1}) \geq t(z_{k-1})$. 
In this case, introducing the notation $\alpha_t = \psi(z_t) - \delta\cdot(t-f-1) \D{ij} - t(z_t) \geq 0$~ (compare with Definition~\ref{def:target}), one gets:
\[
\wp(z_k) - \wp(z_{k-1}) ~\leq ~ t(z_k) -
  t(z_{k-1}) ~=~ 
\psi(z_k) - \delta\cdot(k-f-1) \D{ij} -
  \alpha_k] ~-~ [\psi(z_{k-1}) - \delta\cdot((k-1)-f-1) \D{ij} -
  \alpha_{k-1}]
\]
\[
=~~ \psi(z_k) - \psi(z_{k-1}) ~-~  (\delta\cdot \D{ij} + \alpha_k  - \alpha_{k-1})
~~~\leq~~~ \psi(z_k) - \psi(z_{k-1})\,,
\]
where the last inequality holds  since ~ $\alpha_k \leq s(z_f,z_k) = \delta \cdot d(z_f, z_k) \leq \delta \cdot \D{ij}$,~ $0 \leq \alpha_{k-1}$, and so
 $(\delta \cdot \D{ij} + \alpha_k  - \alpha_{k-1}) \geq 0$.
\end{proof}
Claim~\ref{obs:11} immediately implies that $\wp$ is 1-Lip with respect to the line metric induced on $X$ by $\psi$, and hence it is 1-Lip with respect to the original $(X,d)$ as well.  

Finally, we need to show that $\wp(z_j) - \wp(z_i)$ additively
approximates $\psi(z_j) - \psi(z_i)$, as required by the lemma. 
By Definition~\ref{def:appr}, since $z_i=z_f=x_1$, it holds that $\wp(z_i)=0$, and since
$j=n$,~ it holds that $\wp(z_j) \geq t(z_j)$. Keeping in mind that by our assumptions $\psi(z_i) = \psi(x_1)=0$, and borrowing the notation $\alpha_t$ from the proof of Claim~\ref{obs:11}, we get
\begin{equation}
\label{eq:weakI}
\wp(z_j) - \wp(z_i) ~\geq ~ t(z_j) -  0 ~=~ [\psi(z_j) - \delta\cdot(j-f-1) \D{ij} -
  \alpha_j] ~-~ \psi(z_{i}) ~\geq~ \psi(z_j) - \psi (z_i) -  \delta \cdot (j-i)\D{ij} 
\end{equation}
$\mbox{}$~
\underline{\bf * Case Ia:}~ $z_i = z_f$,~  $x_1$ precedes $z_i$,~ $z_j=z_n$. 
\vspace{0.1cm}\\ 
The construction is similar to that of {\em Case I}, the only difference being that 
one needs to consider $\psi(x) - \psi(z_f)$ instead of $\psi(x)$, and for all $k \leq f$,~ 
set $\wp(z_k)=0$. It easy to verify that $\wp$ is indeed feasible.
%%%%%%%%%%%%%
%%%%%%%%%%%%%%%%%%%%%%%%%%%%%%%%%%%%%%%
%%%%% STRAY POINTS %%%%%%
%%%%%%%%%%%%%%%%%%%%
\ignore{
{\bf Important Remarks:} {\em In what follows, we shall need slightly stronger versions of {\em Case I}.

$\bullet$~  Call $z_k$ a {\em stray} point, if  $k>f$ and $\psi(z_k) = \psi(z_f)$. After slightly changing the construction of $\wp$, the additive approximation property as stated in Eq.~(\ref{eq:weakI}) remains true when $\D{ij}$, the $d$-diameter of the set $\{z_i,z_{i+1},\ldots,z_j\}$, is replaced by ${ D}^*_{ij}$, the $d$-diameter of this set minus the stray points. 

The change is in Def.~\ref{def:target}, the definition of the targets. The original $f$ used there should be replaced with $f' \geq f$, the maximal index such that 
$\psi(z_{f'}) = \psi(z_f) = 0$.~ Specifically, set \,$t(z_k) = 0$\, for all $k \leq f'$, while for $k>f'$,\, set \,$t(z_k)$\, to be the largest value in \,$S(z_f,z_k)$\, such that~
$t(z_k) ~\leq~ \psi(z_k) -  \delta\cdot (k-f'-1)\,D_{ij}^*$\,. The rest of the argument
remains essentially unchanged.

$\bullet$~  Consider any pair of points $z_q,z_r$ with $f \leq q < r \leq n$. Then,
\[
\wp(z_r) - \wp(z_q) ~\geq ~  \psi(z_r) - \psi (z_q) -  \delta \cdot (n-f)\cdot \D{ij}\,.
\]
To see this, observe that the minimal value that $\wp(z_r)$ can possibly get
is $\psi(z_r) - (n-f-1)\D{ij} - \alpha_r$, and that $\wp(z_q) \leq \psi(z_q)$.  
}
}%%%%%%%%%%%%%%%%%%%%
%%%%%%%%%%%%%%%%%%%%%%
%%%%%%%%%%%%%
\vspace{0.4cm} 
\\$\mbox{}$~
%%% SOMETHING NEW 
\underline{\bf * Intermission: Useful operators on $\psi$}\\
All the below operators preserve the values of $x\in S^\psi$, and in particular, preserve the interval $[L,R] \subset \R$. 
%%%%%%%%%%%%%%%%%%%%%%
%
%\ignore{
\begin{definition}
\label{def:opera} {\em
$\mbox{}$
\begin{itemize}
\item {\em L-truncation} and {\em R-truncation}.\\
The L-truncation of $\psi$ is defined as $\psi^*$, where $\psi^*(z_k) = \psi(z_k)$ if
$k \geq i$,~ and $\psi^*(z_k) = \psi(z_i)$ if $k \leq i$.\\
The R-truncation of $\psi$ is defined as $\psi^*$, where $\psi^*(z_k) = \psi(z_k)$ if
$k \leq j$,~ and $\psi^*(z_k) = \psi(z_j)$ if $k \geq j$.

These operators preserve the original $z$-order, and the 1-Lip property. The set $S^\psi$ remains unchanged.
%The only undesirable effect they may have is that $\psi^*(x_1)$ may not be equal to $0$.
%But this is easy to fix.
%
\item {\em L-expansion} and {\em R-expansion}.
% with respect to ~ $S^\psi  \subset Y \subseteq \psi^{-1} [L,R]$.
\\
The L-expansion with respect to $S^\psi$ is defined as $\psi^*$, where 
$\psi^*(z_k) = \psi(z_k)$ \,if\, $z_k \in S^\psi$, or $\psi(z_k) > L$.
Otherwise, $\psi^*(z_k) = \max_{y\in S^\psi} \psi(y) - d(y,z_k)$. \\
The R-expansion with respect to $S^\psi$ is defined as $\psi^*$, where 
$\psi^*(z_k) = \psi(z_k)$ \,if\, $z_k \in S^\psi$, or $\psi(z_k) < R$. Otherwise,
$\psi^*(z_k) = \min_{y\in S^\psi} \psi(y) + d(y,z_k)$. \\
{\em Call $y_0 \in S^\psi$ that actually defines $\psi^*(z_k)$ the {\em anchor} of $z_k$.
In particular, the anchor of $z_k \in S^\psi$ is $z_k$ itself.} 

The set $S^\psi$ remains unchanged.
\item {\em L-folding} and {\em R-folding}.\\
The L-folding of $\psi$ is defined as $\psi^*$,~ where $\psi^*(x) = \psi(x)$ \,if\,
$\psi(x) \geq L$,\; and\; $\psi^*(x) = L + (L-\psi(x))$,\, i.e., the reflexion of $\psi(x)$ with respect to $L$, otherwise. \\
The R-folding of $\psi$ is defined as $\psi^*$,~ where $\psi^*(x) = \psi(x)$ \,if\,
$\psi(x) \leq R$,\; and\; $\psi^*(x) = R - (\psi(x) -R)$,\, i.e., the reflexion of $\psi(x)$ with respect to $R$, otherwise.

It is easy to verify that foldings preserve the 1-Lip property.
However, they may significantly change the $z$-order. For $\psi^*$ obtained from $\psi$ by a folding, the set $S^{\psi^*}$ is obtained from $S^\psi$ by augmenting it by all points $x$, such  that  $\psi^*(x) \in  [L,R]$ but $\psi(x) \neq \psi^*(x) $.
 
\item {\em Expanded  L-folding} and {\em R-folding}. \\
% with respect to $Y$, where~ $S^\psi  \subset Y \subseteq \psi^{-1} [L,R]$.\\
The expanded R-folding of $\psi$ is defined as $\psi^*$ resulting from the R-expansion with respect to $S^\psi$, followed by the R-folding. The expanded L-folding is defined analogously.

The set $S^{\psi^*}$ is obtained from $S^\psi$ as in the simple folding.

\item {\em Expanded truncated  L-folding} and {\em R-folding}. \\
% with respect to $Y$, where~ $S^\psi  \subset Y \subseteq \psi^{-1} [L,R]$.\\
The expanded R-folding of $\psi$ is defined as $\psi^*$ resulting from the R-expansion with respect to $S^\psi$, followed by the R-folding, and then collecting
all $z$'s with $\psi(z) \geq R$ whose current value is $< L$, and resetting this value to $L$.
The expanded L-folding is defined analogously.

The set $S^{\psi^*}$ is obtained from $S^\psi$ as in the expanded folding, however the
points whose value was reset in the last step are not included, making them {incidental}. 
As we shall see, expanded truncated foldings preserve the 1-Lip property.
\end{itemize}
}
\end{definition}
\begin{claim}
\label{ob:e}
Let $\psi$ be 1-Lip, and let $\psi^*$ be its R-expansion. Then,~ $\forall x$ $\psi^*(x) \geq \psi(x)$,~ and $\psi^*|_{S^\psi \,\cup\, \{x \;|\; \psi(x) \geq R\}}$ ~is 1-Lip with 
respect to $d$.
A similar statement holds for the L-expansion.
\end{claim}
\begin{proof} For the first statement, it suffices to verify it for $x$'s with $\psi(x) \geq R$. 
Keeping in mind that $\psi$ is 1-Lip, and using the anchor $x_0 \in S^\psi$ of $x$, one has
\[
\psi^*(x) - \psi(x) ~=~ [\psi(x_0) + d(x_0,x)] - \psi(x) ~=~ d(x,y) - [\psi(y) - \psi(x)] \geq 0\,.
\]
For the second statement, let~ $z,y \in \; S^\psi \,\cup\, \{x \;|\; \psi(x) \geq R\}$, and let $z_0$ be the anchor of $z$. Then,
%let $\psi^(x) = \psi(x_0)  + d(x_0,x)$ where $x_0 \in S^\psi$. 
%Notice that for $x\in S^\psi$, one can choose $x_0=x$. Then,
\[
\psi^*(z) - \psi^*(y) ~\leq ~ [\psi(z_0) + d(z_0,z)] -  [\psi(z_0) + d(z_0,y)] ~=~ 
d(z_0,z) - d(z_0,y) ~\leq~ d(x,y)\,.
\]
\end{proof}
%%%%%%%%%%%%%%%%%%%%%%%%%%%%
%
%
\begin{claim}
\label{ob:ef}
Let $\psi$ be 1-Lip, and let $\psi^*$ be its expanded R-folding. 
%Let $K = \{x \,|\, \psi^*(x) \in [L,R]\}$. 
Then,~ $\psi^*|_{S^{\psi^*}}$ ~is 1-Lip with respect to $d$, and
$D^{\psi^*} \leq D^{\psi} + 4(R-L)$. A similar statement holds for the expanded L-folding.
\end{claim}
\begin{proof} The first statement follows at once from the previous claim, and the
1-Lip preservation property of the R-folding. 

For the second statement, observe that $x\in S^{\psi*}\setminus S^\psi$ if and only if ~$R< \psi^*(x) \leq  2R-L$. Let $x_0$ be the anchor of $x$. Then, as $\psi(x_0) \geq L$\; and\;
$\psi^*(x) =  \psi(x_0) + d(x_0,x) \leq 2R - L$, one concludes that ~$d(x_0,x) \leq  2(R-L)$.
Since $S^{\psi*}$ was obtained from $S^\psi$ by augmenting it with points at $d$-distance  at most $2(R-L)$ from $S^\psi$, the statement follows.
\end{proof}
Since truncation ensures that $\psi^(x)$ for $x\in S^{\psi*}\setminus S^\psi$ can only
get closer to $\psi(y)$'s with $y \leq L$ without ever jumping over them, we get
\begin{corollary}
\label{ob:tef}
Let $\psi$ be 1-Lip, and let $\psi^*$ be its  expanded truncated R-folding. 
%Let $K = \{x \,|\, \psi^*(x) \in [L,R]\}$. 
Then,~ $\psi^*$ ~is 1-Lip with respect to $d$, and $D^{\psi^*} \leq D^{\psi} + 4(R-L)$. A similar statement holds for the expanded L-folding.
\end{corollary}
We are now ready to return to analysis of the remaining cases.
\vspace{0.5cm} $\mbox{}$\\ $\mbox{}$~ 
%%%%%%%%%%%%%%%%%%%%%%%%%%%%%%%%%%%%
%%%%%%%%%%%%%%%%%%%%%%%%
\underline{\bf * Case II:}~ {\em $z_i=z_f$, $\psi(x_1)=\psi(z_f)$,~ $z_j$ arbitrary.}\vspace{0.1cm}\\
The difficulty in using the construction of {\em Case Ia}\, is that \;$\D{in}$\; can now be considerably larger than $\D{ij}$. To resolve this difficulty, we replace the original 
$\psi$ by its expanded truncated R-folding, and then apply to it L-truncation. 
By Corollary~\ref{ob:tef} both operations preserve 1-Lip property, and so 
the resulting $\psi^*$ is 1-Lip, and $D^{\psi^*}$, the $d$-diameter of  
of $S^{\psi^*}$, is at most $5\D{ij}$. Hence, replacing $\psi$ by $\psi^*$ reduces the original problem to {\em Case Ia}, and implies the existence of $\widetilde{\psi^*} \in \Phi$,
such that 
\[
\widetilde{\psi^*}(z_j) - \widetilde{\psi^*}(z_i) ~~\geq~~ \psi(z_j) - \psi (z_i) -  \delta\cdot 5(j-i)\D{ij}\,.
\]
\vspace{0.3cm}\\ $\mbox{}$~
\underline{\bf * Case III:}~ $z_f=x_1 \in [z_i, z_j]$ in the $z$-order.  \\
%I.e., the first exposed point in the interval $[\psi(z_i),\psi(z_j)]$ is $x_1$.
We assume without loss of generality that $\psi(x_1)=0$.
This case reduces to\, {\em Case II}\, by applying the expanded truncated L-folding
to $\psi|_{\langle z_1,\ldots, x_1\rangle}$ with $z_i$ and $z'_j= x_1$, and 
the expanded truncated R-folding to $\psi|_{\langle x_1,\ldots, z_n\rangle}$
with $z'_i = x_1$ and $z_j$, respectively, and taking the union of the resulting partial functions. This is possible since their domains of definition share only the point $x_1$, and both
partial functions assign the same value $0$ to it.
The easy verification of that thus constructed $\psi^*$ is 1-Lip, and it does not change the
values of the original $z_i,z_j$, is left to the reader. The conclusion is
\[
\widetilde{\psi^*}(z_j) - \widetilde{\psi^*}(z_i) ~~\geq~~ 
[\psi(z_f) - \psi (z_i) -  \delta\cdot 5(f-i) \cdot \D{if}] ~+~
[\psi(z_j) - \psi (z_j) -  \delta\cdot 5(j-f) \cdot \D{fj}]  ~~\geq~~\]
\begin{equation}
\label{eq:CaseIII}
~~\geq~~ \psi(z_j) - \psi (z_i) -  \delta\cdot 5(j-i) \cdot \D{ij}\,.
\end{equation}
%%%%%%%%%%%%%%%%%%
%%%%%%%%%%%%%%%%%%%%%%%%%
$\mbox{}$ \vspace{0.05cm}\\ $\mbox{}$
\underline{\bf * Case IV:}~ $x_1 \not\in [z_i,z_j]$ in the $z$-order.
\vspace{0.1cm}\\
In this case, we perform repeatedly expanded L-foldings followed by expanded R-foldings, creating a series of mappings $\psi_0=\psi,\psi_1,\psi_2,\ldots$, where the $i$'th folding is with respect to $S^{\psi_{i-1}}$. The process stops when $S^{\psi_{t}} = S^{\psi_{t+1}}$.

Observe that as long as the process runs, $|S^{\psi_i}|$ strictly grows, and therefore it must stop in at most $(n-2)$ rounds. Observe also that by Claim~\ref{ob:ef}, all the partial mapping 
$\psi^i|_{S^{\psi_{i}}}$  are 1-Lip. Finally, observe that Claim~\ref{ob:ef} implies
at the stopping time $t$, since $S^{\psi_{t}} = S^{\psi_{t+1}}$,  for every 
$x\not\in S^{\psi_t}$, the $d$-distance $\dist(x,S^{\psi_t}) > 2(R-L)$. 

\vspace{0.1cm}$\mbox{}$
Define a new mapping\, $\psi^*$, where $\psi^*(x) = \psi_t(x)$ \,if $x\in S^{\psi_t}$, and \,
$\psi^*(x) = \psi_t(z_f^{\psi_t})$, otherwise. By the last observation about $\psi^t$,\; 
$\psi^*$ is 1-Lip.

By Claim~\ref{ob:ef},   
\[
D^{\psi_t} ~\leq~ D^{\psi} + 2t\cdot(R-L) ~<~ (2n-3)\cdot D_{ij}\,.
\]
We can now reduce the problem to {\em Case III} with a small alteration. 
The problem is that mapping all $x\not\in S^{\psi_t}$ to $\psi_t(z_f^{\psi_t})$\;
causes $S^{\psi^*}$ strictly contain $S^{\psi_t}$, possibly causing in turn an uncontrollable growth of $D^{\psi_t}$. To overcome this problem, we assume without loss of generality that
$\psi_t(z_f^{\psi_t}) = 0$, and proceed as in {\em Case III} for all $x\in S^{\psi_t}$, while setting $\widetilde{\psi^*}(x) = 0$ for all $x\not\in S^{\psi_t}$. It is easy to see
that the resulting $\widetilde{\psi^*}$ is feasible. Thus, that there exists 
$\widetilde{\psi^*} \in \Phi$ such that 
\begin{equation}
\label{eq:CaseIV}
\widetilde{\psi^*}(z_j) - \widetilde{\psi^*}(z_i) ~~\geq~~ \psi(z_j) - \psi (z_i) -  \delta\cdot 5(j-i)\cdot 2n \cdot \D{ij}\,.
\end{equation}
This completes the proof of the Lemma.
\end{proof}
\subsubsection{The conclusion} 
We can now conclude the proof of Theorem~\ref{th:finally}.

As it was mentioned earlier, the embedding  $\varphi_x$ of $(X,d)$ into the line given by 
$\varphi_x(z) = d(x,z)$, 
is on one hand 1-Lip, while on the other hand, it preserves all distances between $x$ and the other points of $X$.
Thus, applying Lemma~\ref{lm:main} to $\varphi_x$ with $z_i=x$ and $z_j=y$, observing 
that $\D{ij} \leq 2d(z_i,z_j)$, and choosing $\delta = \epsilon/20n^2$,
we conclude that there exists $\widetilde{\varphi_x} \in \Phi$, such that 
\[
\widetilde{\varphi_x}(y) - \widetilde{\varphi_x}(x) ~~\geq~~  [\varphi_x(y) - \varphi_x(x)] - \delta \cdot 
10n^2 \cdot \D{xy} ~~=~~ d(x,y) - \epsilon \cdot \D{xy}/2 ~~\geq~~ (1-\epsilon)\cdot d(x,y).
\]
Ranging over all $\varphi_x$ and $y$'s, \;$x,y \in X$, we arrive at the desired conclusion.
By the discussion following the Definition~\ref{def:F}, the dimension of the host space 
is bounded from above by
\[
(n-1)! \cdot (2\delta^{-1} +2)^{n-1} ~\leq~ 40^n\,(n-1)!\, n^{2n}\,  (1+ 1/\epsilon)^n\,.
\]

\subsection{The Lower Bound}
In what follows we rove Theorem \ref{lem:old}.

 Consider the class $\cal A$ of metrics on $X=\{a,b,c,q\}$,
induced by the geodesic metric of $\S^1$, the continuous $1$-dimensional cycle, 
on two pairs of antipodal points, $\{a,b\}$ and $\{c,q\}$. The distances $d(a,b)$ and $d(c,q)$ are
 are equal to $1$. The distance $\alpha =d(a,c)=d(b,q)$ fully
 specifies a metric in  $\cal A$. Let us denote it by $d_\alpha$. 
%%%
\begin{theorem}
\label{lem:174} 
Let $\phi$ be any 1-Lipschitz online embedding of the class of metrics $\cal A$  as above
into $\ell_\infty^k$. Then, there exists $d_\alpha \in \cal A$ on which $\phi$ incurs a contraction of \;
$(1+ {1 \over {2k+1}})$.  Consequently, to ensure a distortion  of \; $\leq  1+\epsilon$\; for all $d_\alpha \in A$,   the dimension must be $\Omega(\epsilon^{-1})$.
\end{theorem}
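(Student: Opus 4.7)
My plan is to exploit that the $\ell_\infty$ $1$-Lipschitz condition $\|x-y\|_\infty \leq d(x,y)$ decouples across coordinates into $|x_i - y_i| \leq d(x,y)$, so that once $\phi(a),\phi(b)$ are placed, the algorithm's subsequent optimization splits into $k$ independent one-dimensional problems. The adversary's task then reduces to picking a single $\alpha$ that simultaneously frustrates all $k$ coordinates, which I will defeat by a Lebesgue-measure covering argument on $\alpha \in (0,1)$.

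Fix a coordinate $i$, normalize $\phi_i(a)=0$, and by sign symmetry assume $\delta_i := \phi_i(b) \in [0,1]$ (the $1$-Lip condition forces $|\phi_i(b)| \leq d(a,b) = 1$). When the adversary exposes $c$ with parameter $\alpha$, the algorithm may place $\phi_i(c)$ anywhere in
\[
[-\alpha,\alpha] \;\cap\; [\delta_i-(1-\alpha),\, \delta_i+(1-\alpha)],
\]
and an analogous interval (with the roles of $\alpha$ and $1-\alpha$ swapped) later constrains $\phi_i(q)$. A direct case analysis based on the sign of $\delta_i-(1-2\alpha)$ shows that the maximum achievable $|\phi_i(c)-\phi_i(q)|$ is a piecewise-linear function $f(\delta_i,\alpha)$, symmetric about $\alpha=1/2$, with slopes $\pm 2$ throughout, which attains its maximum value $1$ at the two peak locations $\alpha = (1\pm\delta_i)/2$, and satisfies $f(\delta,1/2)=1-\delta$ and $f(\delta,0)=f(\delta,1)=\delta$. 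Carrying out this case analysis is the most tedious step, but entirely mechanical.

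Now suppose $\Phi$ achieves contraction at most $1+\epsilon$ on every $d_\alpha \in \mathcal{A}$. Since $\Phi$ is $1$-Lip (so expansion $\leq 1$) and $d(c,q)=1$, we must have $\|\phi(c)-\phi(q)\|_\infty \geq 1/(1+\epsilon) \geq 1-\epsilon$, which forces $\max_i f(\delta_i,\alpha) \geq 1-\epsilon$ for every $\alpha \in (0,1)$ the adversary could pick. Because $f(\delta_i,\cdot)$ has slope $\pm 2$ near each of its two peaks, the superlevel set $\{\alpha \in (0,1) : f(\delta_i,\alpha) \geq 1-\epsilon\}$ is a union of at most two intervals of total Lebesgue measure at most $2\epsilon$.

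For the $k$ superlevel sets (one per coordinate) to collectively cover all of $(0,1)$, we need $2k\epsilon \geq 1$. Contrapositively, taking $\epsilon = 1/(2k+1)$ gives $2k\epsilon = 2k/(2k+1) < 1$, so some $\alpha^* \in (0,1)$ is not covered; at that $\alpha^*$ one has $f(\delta_i,\alpha^*) < 1-\epsilon$ for every $i$, and hence on the metric $d_{\alpha^*}$ we obtain $\|\phi(c)-\phi(q)\|_\infty < 1-1/(2k+1)$. Thus the pair $(c,q)$ is contracted by strictly more than $1/(1-1/(2k+1)) = 1+1/(2k) \geq 1+1/(2k+1)$, proving the theorem. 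The corollary $k=\Omega(1/\epsilon)$ for guaranteed $(1+\epsilon)$-distortion follows immediately from the bound $k \geq 1/(2\epsilon)$ implicit in the covering argument.
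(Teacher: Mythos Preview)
Your proof is correct. The paper's argument follows the same outline---decouple into $k$ one-dimensional problems, then have the adversary pick a single $\alpha$ that is simultaneously bad for every coordinate on the pair $(c,q)$---but the execution differs. The paper applies pigeonhole directly to the $k$ values $\phi_i(b)\in[0,1]$: these partition $[0,1]$ into at most $k+1$ intervals, so some midpoint $p$ is at distance $\geq \tfrac{1}{2(k+1)}$ from every $\phi_i(b)$; setting $\alpha=(1-p)/2$ and doing a short two-case estimate yields $|\phi_i(c)-\phi_i(q)|\leq 1-\tfrac{1}{2(k+1)}$ for every $i$. You instead compute the optimal-response function $f(\delta,\alpha)$ explicitly and run a Lebesgue covering argument on its superlevel sets. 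Your route is more systematic (and in fact yields the marginally sharper contraction $>1+\tfrac{1}{2k}$), while the paper's pigeonhole avoids computing $f$ altogether; note that the paper's choice of $p$ lands in $[0,1]$, hence $\alpha\in[0,1/2]$, so it is implicitly working with only one of the two peaks you identify, which is why the two arguments produce slightly different constants.
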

\begin{proof}
The adversary will expose first the antipodals $a,b$, at distance 1. Then, seeing the embedding
of these two points, the adversary will choose $\alpha$ appropriately, and expose the antipodal $c,q$ 
according to $d_\alpha$.  

As before, we shall view the embedding $\Phi:X \rightarrow \ell_\infty^k$ as a family of $k$ online non-expanding embeddings of $X$ into the line, 
$\Phi=\{\phi_1, \ldots ,\phi_k, ~ ~ \phi_i : X \longrightarrow \R
\}$. We  assume, w.l.o.g.,  that $\phi_i(a) = 0$ 
for all $\phi_i \in \Phi$,  and that $\phi_i(a) \leq \phi_i(b) \leq 1$.
For convenience, assume also that $\phi_i$'s are ordered by the value of $\phi_i(b)$, i.e., that~ $0 \leq \phi_1(b) \leq \phi_2(b) \leq \ldots \leq \phi_k(b) \leq 1$.
The values $\{ \phi_i(b) \}$ partition $[0,1]$ into at most $(k+1)$ intervals, and so, 
by the pigeonhole principle, the middle-point of the longest of these intervals is at least ${1 \over {2(k+1)}}$ 
apart from any $\phi_i(b)$. Call this point $p$.
Let $j$ be such  $0 \leq \phi_{j}(b) < p < \phi_{j+1}(b) \leq 1$. By the choice of $p$,  it holds that
$p \geq \phi_i(b)+{1 \over {2(k+1)}}$ when $i \leq j$, and $p \leq \phi_i(b)-{1 \over {2(k+1)}}$
 when $i > j$.

The $\alpha$ chosen by the adversary will be $\alpha = {{1-p}\over 2}$.\; I.e., in the chosen 
$d_\alpha$, the distance between $a$ and $c$ is ${{1-p}\over 2}$.

We claim that for all $i$, whatever the values of $\phi_i(c)$ and $\phi_i(q)$ may be, ~
$|\phi_i(q) - \phi_i(c)| \leq 1 - {1 \over {2(k+1)}}$. \\ 
Indeed recall that $\phi_i$ is not expanding for all $i \in [k]$ and
hence, $\phi_i(c) \in [-\alpha, \alpha]$.

Consider first $i \leq j$ then $\phi_i(q) \in [\phi(b) \alpha, \phi +
\alpha$ and hence
%$ and $\phi_i(c)$ must lie in the interval $[\phi_i(a) - \alpha,
%\phi_i(b) + \alpha]$, and %
so
\[
|\phi_i(q) - \phi_i(c)| ~\leq~ (\phi_i(b) + \alpha) - (\phi_i(a) - \alpha) ~=~ \phi_i(b) + 2\alpha
~\leq~  \phi_j + 1 - p  ~\leq~ 1 - {1 \over {2(k+1)}}~.
\]
Consider now $i \geq j+1$. Using the non-expansion of $\phi_i$ again, we conclude that
\[
   \phi_i(b) - (1 - \alpha)  ~\leq~ \phi_i(c)~\leq~ \phi_i(a) + \alpha\;; ~~~~~~~~~ 
\phi_i(b) - \alpha ~\leq~ \phi_i(q) ~\leq~ \phi_i(a) + (1 - \alpha)
\]
But since $\alpha \leq 0.5$ this implies that,
\[
|\phi_i(q) - \phi_i(c)| ~\leq~ \max\{ \, \phi_i(a) + (1 - \alpha) - (\phi_i(b) - (1 - \alpha)),~~
 \phi_i(a) + \alpha\ - (\phi_i(b) - \alpha)\, \} 
\]
\[
=~ \max\{\, 2(1-\alpha) - \phi_i(b),~~ 2\alpha - \phi_i(b) \,\} ~=~ 2 (1-\alpha) - \phi_i(b)
~=~ 1 + p - \phi_i(b) ~\leq~ 1 - {1 \over {2(k+1)}}~.
\]
Thus, the contraction of $\phi$ is at least $( 1 - {1 \over {2(k+1)}})^{-1} ~=~ 1 + {1 \over {2k+1}}$.
\end{proof}
%
%%
%%%
\section{Remarks on isometric online embeddings of trees}
We conclude the discussion with a number of remarks on isometric embeddings of tree metrics. It is well known that the tree metrics $d_T$ (like the Euclidean metrics)  are 
rigid, i.e., there exists a unique (minimal) weighted tree $T^*$ with
Steiner points realizing $d_T$ as its submetric. Moreover, this $T^*$ can be constructed
in an online manner, regardless of the order of exposure.
\begin{theorem}\label{lem:l-1}
Every tree metric $d_T$ can be isometrically online embedded  into a metric of a weighted tree  $T^*$ (using Steiner points). The knowledge of $n$ is not required.
\end{theorem}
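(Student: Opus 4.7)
The plan rests on the classical \emph{rigidity} of tree metrics: for any tree metric $d$ on a finite set $Y$, there is a unique minimal weighted tree $T^*(Y,d)$ (one in which every Steiner point has degree at least three) whose shortest-path metric restricted to $Y$ equals $d$. Consequently, if after $t-1$ exposures we have built the minimal tree $T^*_{t-1}$ realizing $d$ on $\{x_1,\ldots,x_{t-1}\}$, then the minimal tree $T^*_t$ for $\{x_1,\ldots,x_t\}$ must contain $T^*_{t-1}$ as a subtree up to subdividing at most one edge, with exactly one new leaf edge leading to $x_t$. The online algorithm therefore needs only to locate the attachment point $p_t$ of $x_t$ on $T^*_{t-1}$ and determine the weight of the new edge.

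To locate $p_t$, I would fix $x_1$ as root of $T^*_{t-1}$ and compute, for each $i\in\{2,\ldots,t-1\}$, the Gromov product
$$
g_i \;=\; \tfrac{1}{2}\bigl(d(x_1,x_t)+d(x_1,x_i)-d(x_i,x_t)\bigr).
$$
Geometrically, $g_i$ is the distance from $x_1$ to the median of the triple $(x_1,x_i,x_t)$, and this median lies on the path from $x_1$ to $p_t$ in the eventual tree $T^*_t$. Set $g^*=\max_i g_i$ and let $i^*$ achieve the maximum; then $p_t$ is the unique point at distance $g^*$ from $x_1$ along the path from $x_1$ to $x_{i^*}$ in $T^*_{t-1}$. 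If $p_t$ coincides with an existing vertex, attach $x_t$ there by a new edge of weight $d(x_1,x_t)-g^*$; otherwise subdivide the edge containing $p_t$ by inserting $p_t$ as a new Steiner point, and then attach $x_t$ by such an edge. Note that $n$ need not be known in advance, and previous points are never moved.

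Correctness reduces to two checks. First, the location of $p_t$ is independent of the choice of maximizer $i^*$: if two indices $i^*,j^*$ both achieve $g^*$, the four-point condition applied to $(x_1,x_{i^*},x_{j^*},x_t)$ forces the paths from $x_1$ to $x_{i^*}$ and from $x_1$ to $x_{j^*}$ in $T^*_{t-1}$ to coincide for at least their initial $g^*$ units. Second, every distance $d(x_t,x_i)$ is correctly realized: using the additive structure of $T^*_{t-1}$, the path from $x_i$ to $p_t$ inside $T^*_{t-1}$ has length $d(x_1,x_i)+g^*-2g_i$, so
$$
d_{T^*_t}(x_i,x_t) \;=\; \bigl(d(x_1,x_i)+g^*-2g_i\bigr)+\bigl(d(x_1,x_t)-g^*\bigr) \;=\; d(x_1,x_i)+d(x_1,x_t)-2g_i \;=\; d(x_i,x_t),
$$
while distances among $\{x_1,\ldots,x_{t-1}\}$ are untouched by an edge subdivision. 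Induction on $t$ then finishes the proof.

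The main obstacle is the careful geometric verification that the path from $x_1$ to $x_i$ in $T^*_{t-1}$ branches away from the path $x_1\to p_t$ precisely at distance $g_i$ from $x_1$; this requires the four-point condition inside $T^*_{t-1}$ together with the observation that $\{d(x_1,x_j),d(x_j,x_k)\}_{j,k<t}$ already determines $T^*_{t-1}$ by rigidity. Degenerate cases (e.g.\ $g^*=0$, in which case $p_t=x_1$ and $x_t$ hangs directly off the root; or $p_t$ coinciding with some $x_j$ or with an existing Steiner point, in which case no new vertex is created) are all handled uniformly by the same update rule.
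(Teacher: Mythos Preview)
Your proposal is correct and follows the same high-level approach the paper sketches: at each step, locate a unique attachment point (possibly a new Steiner vertex) in the tree built so far, and hang the newly exposed point from it by a single weighted edge. The paper, however, gives essentially no proof --- it literally writes ``Skipping the details'' and only asserts that such a Steiner point $s_x$ and weight $w_x$ exist and are unique. Your write-up supplies precisely the details the paper omits, via the standard Gromov-product construction: computing $g_i=(x_i\cdot x_t)_{x_1}$, taking $g^*=\max_i g_i$, and placing $p_t$ at distance $g^*$ along the $x_1$--$x_{i^*}$ path in $T^*_{t-1}$.

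The one place worth tightening is the ``main obstacle'' you flag. The claim that the $x_1$--$x_i$ path in $T^*_{t-1}$ separates from the $x_1$--$p_t$ path at distance exactly $g_i$ is equivalent to $\min\bigl((x_i\cdot x_{i^*})_{x_1},\,g^*\bigr)=g_i$. This follows cleanly from the four-point condition applied to $(x_1,x_i,x_{i^*},x_t)$ in $d$: writing the three pairwise sums in terms of $g_i$, $g^*$, and $h=(x_i\cdot x_{i^*})_{x_1}$, the condition forces either $h=g_i$ (when $g_i<g^*$) or $h\ge g^*=g_i$ (when $g_i=g^*$), and in both cases $\min(h,g^*)=g_i$. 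With this in hand your distance computation $d_{T^*_t}(x_i,x_t)=d(x_i,x_t)$ goes through for every $i<t$, and the argument is complete.
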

Skipping the details, at each step, given a new point $x$, the embedding introduces a new (uniquely defined) Steiner point $s_x$ in the tree constructed so far, and attaches $x$ to $s_x$ by a new edge of (uniquely defined) weight $w_x$. The Steiner point $s_x$
can also be a proper vertex of the tree.

Next we claim that tree metrics isometrically  embed online into
$\ell_1$ and $\ell_\infty$.
\begin{theorem}
  \label{lem:l-5}
  Every tree metric on $n$ points is isometrically online embeddable 
  into $\ell_1^{n-1}$.
\end{theorem}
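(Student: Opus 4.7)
The plan is to run Theorem~\ref{lem:l-1} as a black box and, in parallel, to maintain at each step an $\ell_1$ isometric embedding $\phi_k$ of the whole Steiner tree $T^*_k$ (including Steiner vertices) into $\R^{k-1}$, adding exactly one new coordinate whenever a new user point is exposed. After $n$ rounds this yields the desired isometric embedding of $(X,d)$ into $\ell_1^{n-1}$, with no a priori knowledge of $n$.

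At step $k$, by Theorem~\ref{lem:l-1} the new vertex $x_k$ is attached to a unique point $s = s_{x_k}$ of $T^*_{k-1}$ by an edge of weight $w_k = d(x_k, s)$. There are two cases: either $s$ is already a vertex of $T^*_{k-1}$ (in which case $\phi_{k-1}(s) \in \R^{k-2}$ has already been defined), or $s$ is a new Steiner point sitting on some existing edge $(u,v)$, in which case I set $\phi_{k-1}(s) := (1-t)\phi_{k-1}(u) + t\,\phi_{k-1}(v)$ via linear interpolation, with $t = w_{us}/w_{uv}$. I then define $\phi_k(x_k) := (\phi_{k-1}(s),\, w_k) \in \R^{k-1}$, extending all previously placed vectors by a $0$ in the new $(k-1)$-th coordinate. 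Granting $\|\phi_{k-1}(s) - \phi_{k-1}(x_i)\|_1 = d(s, x_i)$, the identity $\|\phi_k(x_k) - \phi_k(x_i)\|_1 = \|\phi_{k-1}(s) - \phi_{k-1}(x_i)\|_1 + w_k$ holds immediately (the new coordinate decouples) and yields $d(x_k, x_i)$ as required.

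The crux of the proof is to maintain the following \emph{geodesic alignment invariant}: for every edge $(a,b)$ of $T^*_k$ and every vertex $y \in T^*_k$ on the $a$-side of $(a,b)$, the vectors $\phi_k(y) - \phi_k(a)$ and $\phi_k(a) - \phi_k(b)$ are coordinate-wise aligned, meaning that in each coordinate they have the same sign (or at least one of them is zero). This invariant is equivalent to the equality $\|\phi_k(y) - \phi_k(b)\|_1 = \|\phi_k(y) - \phi_k(a)\|_1 + \|\phi_k(a) - \phi_k(b)\|_1$ for all such triples. It yields both isometry on all of $T^*_k$ and the justification of the interpolation formula: for $y$ on the $u$-side of $(u,v)$, the vectors $\phi(u) - \phi(y)$ and $\phi(v) - \phi(u)$ are aligned by induction, so $\phi(s) - \phi(y) = (\phi(u) - \phi(y)) + t(\phi(v) - \phi(u))$ has $\ell_1$-norm $d(u,y) + w_{us} = d(s,y)$, as needed. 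I would then verify that the invariant is preserved by the $k$-th step: for the sub-edges $(u,s)$ and $(s,v)$ the alignment with any $y$ on either side follows from the old alignment of $(u,v)$ together with the fact that $\phi(s) - \phi(u)$ is a nonnegative scalar multiple of $\phi(v) - \phi(u)$, and positive combinations of aligned vectors remain aligned; for the new leaf edge $(s, x_k)$ alignment is automatic since $\phi(x_k) - \phi(s)$ lives entirely in the freshly added coordinate, disjoint from every old coordinate; and for old edges $(a,b)$ not incident to $s$, the new vertices $s$ and $x_k$ always lie on the same side as $u$ and $v$, and the required alignments carry over by linearity and by the $0$-padding in the new coordinate.

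The main potential pitfall is that in $\ell_1$ a linear interpolation between two embedded points does not in general yield a point whose $\ell_1$-distance to a third embedded point matches the tree distance to the corresponding internal point, and small ad-hoc examples show that this can fail dramatically. The geodesic alignment invariant above is exactly what rules out such misbehavior, and the entire proof comes down to showing that our inductive construction stays within the restricted class of embeddings for which interpolation behaves tree-geodesically.
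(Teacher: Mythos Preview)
Your construction is exactly the paper's: place the Steiner point by interpolation along its edge, then attach the new leaf by opening a fresh coordinate. Your proof is correct.

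The only difference is in the invariant you carry. The paper maintains the sharper (and simpler) property that \emph{adjacent vertices of $T^*_k$ differ in exactly one coordinate}. This holds initially and is preserved because interpolating between $\phi(u)$ and $\phi(v)$ that differ only in coordinate $j$ yields a point that still differs from each only in coordinate $j$, while the new leaf edge lives entirely in the fresh coordinate. From this, isometry is almost immediate: the edges of $T^*_k$ labelled by a fixed coordinate $j$ form a monotone path (the successive subdivisions of the edge created when coordinate $j$ was born), so along any tree path the per-coordinate increments never cancel, and $\|\phi(a)-\phi(b)\|_1$ equals the sum of edge weights on $P_{T^*}(a,b)$.

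Your geodesic alignment invariant is the natural weakening of this: it is exactly what the one-coordinate-per-edge property implies about signs, and it is what you actually need to justify that linear interpolation respects $\ell_1$ distances to third points. So the two arguments are essentially the same; yours is a bit more general in formulation (it would survive constructions where adjacent vertices differ in several coordinates, provided alignment holds), while the paper's is more specific to this construction and correspondingly shorter to verify.
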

The general idea is to follow the isometric embedding into a weighted tree as discussed above. The key property of the embedding is that the images of adjacent vertices of the tree will differ at a single coordinate. Using this, the image a new Steiner point $y_x$ lying between the existing adjacent vertices $v$ and $u$, will agree with both $v$ and $u$
on their coordinates of agreement, and differ from both on the remaining coordinate.
 
Attaching a new point $x$ to the corresponding Steiner point $y_x$
involves taking the vector representing $y_x$, and adding to it a new coordinate
with value $w_x$. The vectors constructed so far are assumed to have value $0$
on this coordinate. A relatively simple analysis show that this online embedding is
indeed an isometry.  
\\ $\mbox{}$

Our last remark is that anything online-embeddable into $\ell_1$ of an a priori known
dimension $D(n)$, online embeds into $\ell_\infty$ of dimension $2^{D(n)}$ as well.
It is so, since the mapping $x=(x_1,\ldots,x_D) \rightarrow (\langle x,\epsilon^1\rangle, \ldots, \langle x,\epsilon^{2^D}\rangle)$, where $\epsilon^i$'s range over all sign vectors of  \;$\pm 1$'s of dimension $D$, is an isometry between $\ell_1^D$ and $\ell_\infty^{2^D}$, and, moreover, it is online constructible. The dimension $2^{D}$ can be improved to 
$2^{D-1}$ by fixing the first coordinate of all $\epsilon_i$'s to be 1. Thus, 
\begin{theorem}
  \label{lem:l-4}
  Every tree metric on $n$ points is isometrically online embeddable 
  into $\ell_\infty^{2^{n-2}}$.
\end{theorem}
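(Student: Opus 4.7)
The plan is to chain Theorem~\ref{lem:l-5} with the standard linear isometry $\ell_1^D \hookrightarrow \ell_\infty^{2^{D-1}}$ indicated in the paragraph preceding the theorem, specialized to $D = n-1$. Since the dimension $2^{n-2}$ depends on $n$, we assume $n$ is known in advance.

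First, I would invoke Theorem~\ref{lem:l-5} to obtain an online isometric embedding $\phi: (X, d_T) \to \ell_1^{n-1}$. Next, enumerate the $2^{n-2}$ sign vectors
\[
\{\epsilon^1, \ldots, \epsilon^{2^{n-2}}\} ~=~ \{+1\}\times\{-1,+1\}^{n-2}\ \subset\ \{\pm 1\}^{n-1},
\]
i.e.\ all $\pm 1$-vectors of length $n-1$ with first coordinate pinned to $+1$. Define the linear map $\Psi: \R^{n-1} \to \R^{2^{n-2}}$ by $\Psi(y)_i = \langle y, \epsilon^i\rangle$. The embedding claimed by the theorem will be the composition $\Psi\circ\phi$.

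The core computation is that $\Psi$ is an isometry from $\ell_1^{n-1}$ to $\ell_\infty^{2^{n-2}}$. The upper bound $\|\Psi(y)\|_\infty \leq \|y\|_1$ is immediate from Cauchy--Schwarz in the $\ell_1/\ell_\infty$ form: $|\langle y,\epsilon^i\rangle| \leq \|y\|_1\cdot\|\epsilon^i\|_\infty = \|y\|_1$. For the matching lower bound, let $\epsilon := \mathrm{sign}(y) \in \{\pm 1\}^{n-1}$ (breaking ties on zero components arbitrarily); then $\langle y,\epsilon\rangle = \|y\|_1$. One of $\epsilon$ or $-\epsilon$ has its first coordinate equal to $+1$ and hence appears in the enumeration, and in either case its inner product with $y$ has absolute value $\|y\|_1$. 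Thus $\|\Psi(y)\|_\infty = \|y\|_1$, which is the halving trick yielding $2^{n-2}$ rather than $2^{n-1}$ coordinates.

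It remains to observe that $\Psi\circ\phi$ is online. When a new vertex $x_t$ is exposed, $\phi$ assigns it an $\ell_1$-image in $\R^{n-1}$ without altering $\phi$ on previously exposed points (that is the content of Theorem~\ref{lem:l-5}); since $\Psi$ is a fixed linear map, the $\ell_\infty$-images of past points are likewise unchanged, and the new point's $\ell_\infty$-image is computed from its $\ell_1$-image alone. No real obstacle arises here; the only mild subtlety worth flagging is that, unlike Theorem~\ref{lem:l-1}, the present statement is not $n$-oblivious, because the sign-vector enumeration and hence the host dimension must be fixed before the process begins.
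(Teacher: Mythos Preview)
Your proof is correct and follows exactly the route the paper takes: compose the online isometry into $\ell_1^{n-1}$ from Theorem~\ref{lem:l-5} with the fixed linear sign-vector map $\Psi$ into $\ell_\infty^{2^{n-2}}$, using the first-coordinate-pinned halving trick. The paper states this more tersely, but the argument is the same; your only slip is calling the bound ``Cauchy--Schwarz'' when it is really H\"older's inequality for the $\ell_1/\ell_\infty$ pairing.
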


\bibliographystyle{plain}
%\bibliography{bib/my}

\end{document}